\documentclass[a4paper,UKenglish,cleveref,thm-restate]{lipics-v2021}

\usepackage{nicefrac}
\usepackage{mathtools}
\usepackage{xspace}
\usepackage{complexity}
\usepackage[normalem]{ulem}
\usepackage{siunitx}
\usepackage[table]{xcolor}

\graphicspath{{./figures/}}

\crefname{conjecture}{Conjecture}{Conjectures}

\newcommand{\flipSeq}{\ensuremath{\mathcal{S}}\xspace}
\newcommand{\trace}{\ensuremath{\operatorname{trace}}}

\title{Structural Properties of Shortest Flip Sequences Between Plane Spanning Trees}
\titlerunning{Structural Properties of Shortest Flip Sequences Between Plane Spanning Trees}

    \author{Oswin Aichholzer}{Institute of Algorithms and Theory, Graz University of Technology, Austria}{oswin.aichholzer@tugraz.at}{https://orcid.org/0000-0002-2364-0583}{}
    \author{Joseph Dorfer}{Institute of Algorithms and Theory, Graz University of Technology, Austria}{joseph.dorfer@tugraz.at}{https://orcid.org/0009-0004-9276-7870}{Austrian Science Fund (FWF) 10.55776/DOC183.}
    \author{Peter Kramer}{Department of Computer Science, TU Braunschweig, Germany}{kramer@ibr.cs.tu-bs.de}{https://orcid.org/0000-0001-9635-5890}{Fellowship of the German Academic Exchange Service~(DAAD).}
    \author{Christian Rieck}{Institute of Mathematics, University of Kassel, Germany}{christian.rieck@mathematik.uni-kassel.de}{https://orcid.org/0000-0003-0846-5163}{Funded by the Deutsche Forschungsgemeinschaft (DFG, German Research Foundation) -- 522790373.}
    \author{Birgit Vogtenhuber}{Institute of Algorithms and Theory, Graz University of Technology, Austria}{birgit.vogtenhuber@tugraz.at}{https://orcid.org/0000-0002-7166-4467}{}
    \authorrunning{O.~Aichholzer, J.~Dorfer, P.~Kramer, C.~Rieck, and B.~Vogtenhuber}
    \Copyright{Oswin Aichholzer, Joseph Dorfer, Peter Kramer, Christian Rieck, and Birgit Vogtenhuber}

\ccsdesc{Theory of computation~Computational geometry}

\keywords{Spanning trees, convex position, non-crossing, flip graph, happy edges, parking conjecture}

\hideLIPIcs
\nolinenumbers
\acknowledgements{This work was initiated at the Reconfiguration Week hosted at TU Graz in March~2025. We thank the organizers and all participants for fruitful discussions and the very  good~atmosphere.}

\EventEditors{}
\EventNoEds{1}
\EventLongTitle{}
\EventShortTitle{}
\EventAcronym{}
\EventYear{}
\EventDate{}
\EventLocation{}
\EventLogo{}
\SeriesVolume{}
\ArticleNo{}

\begin{document}
    \maketitle

    \begin{abstract}
        We study the problem of reconfiguring non-crossing spanning trees on point sets in the plane in convex position. A \emph{flip} is a reconfiguration step in which one edge of a tree is replaced with another, yielding again a non-crossing spanning tree. For a given point set $P$, the \emph{flip graph} has a vertex for each non-crossing spanning tree on $P$, and an edge between any two trees that differ by a single flip. The \emph{flip distance} between two trees is the length of a shortest path between them in the flip graph, that is, the minimum number of flips needed to transform one tree into the other.

        We study structural properties of shortest flip sequences. Over the years, several conjectures have been proposed regarding these sequences. The folklore \emph{happy~edge conjecture} suggests that any edge shared by both the initial and target tree is never flipped in a shortest flip sequence. Moreover, the \emph{parking conjecture} claims that in shortest flip sequences,
        every edge is either flipped to a target edge or parked on the convex hull. Notably, the happy edge conjecture is implied by the parking conjecture. Furthermore, it has been conjectured that an edge is never \emph{reparked}, that is, any edge resulting from a flip either is a target edge or is subsequently flipped to a target edge. Essentially, all recent flip algorithms rely on these three conjectures and the properties they imply.

        In this paper, we disprove both the parking and the reparking conjecture. In particular, we show that a flip sequence obeying the parking property sometimes needs linearly more flips than a shortest flip sequence, and exhibit examples where any shortest flip sequence flips some edge multiple times. In contrast, we identify several conditions under which the parking and reparking properties hold; these imply that the reparking property holds for compatible flips, and may be exploited by algorithms to obtain short flip sequences.
    \end{abstract}


\pagebreak
\section{Introduction}
\label{sec:introduction}

Reconfiguration problems ask whether one feasible state of a system can be transformed into another through a sequence of allowed moves, a perspective made intuitive by classic puzzles such as the Rubik's cube or the 15-puzzle.
In the former, for instance, a single rotation of a face, either \ang{90} or \ang{180}, constitutes an elementary move, and sequences of such face rotations navigate a vast state space of about $4.3\cdot 10^{19}$ reachable configurations.
The structure of this state space, that is, how configurations are connected by allowed moves, is captured formally by the respective reconfiguration graph, where vertices represent configurations and edges correspond to single moves.
Remarkably, despite its enormous size, the reconfiguration graph of the Rubik's cube has diameter $20$, so any scrambled cube can be solved in at most~$20$ face rotations.
This viewpoint extends far beyond puzzles: in combinatorics, (computational) geometry, and theoretical computer science in general, reconfiguration graphs are a tool for studying families of objects that can be transformed into one another.

Another natural and illustrative example of geometric reconfiguration is the reconfiguration of triangulations of a convex point set.
In this context, a flip is a local move that replaces one diagonal of a convex quadrilateral formed by two adjacent triangles with the other diagonal, thereby obtaining a new triangulation.
The reconfiguration graph (also called flip graph) is the $1$-skeleton of the associahedron.
These triangulations are in bijection with binary trees~\cite{SleatorTT86}, and each flip corresponds exactly to a rotation in the associated tree.
Hence, the number of flips needed to transform one triangulation into another is the same as the number of rotations required to reorganize the corresponding binary trees.
Since binary search trees are a fundamental data structure, understanding and minimizing these rotations is crucial for improving the efficiency of algorithms that rely on tree operations.
Even though it is known that the diameter of the associahedron is equal to $2n-6$ (for $n\geq 11$), it has remained an open problem whether there exists a polynomial-time algorithm for calculating the rotation distance~\cite{POURNIN201413,SleatorTT86}.
In a very recent preprint, Dorfer~\cite{dorfer2026flipdistance} answers this question in the negative.

\medskip
In this paper, we consider the reconfiguration of non-crossing spanning trees on point sets in convex position.
A \emph{flip} in a non-crossing spanning tree removes one edge from the tree and adds another edge, resulting again in a non-crossing spanning tree.
The corresponding \emph{flip graph} has as vertices all non-crossing spanning trees on a convex set~$P$ of $n$ points, and two vertices are connected by an edge precisely when the associated trees differ in a single flip.
A~\emph{flip sequence} between two trees is any path in the flip graph from the vertex representing one tree to that representing the other tree, and the \emph{flip distance} is the length of a shortest such path. See~\cref{fig:intro} for the flip graph of plane spanning trees on $4$ points in non-convex~position.

\begin{figure}[htb]
    \centering
    \includegraphics[scale=0.8]{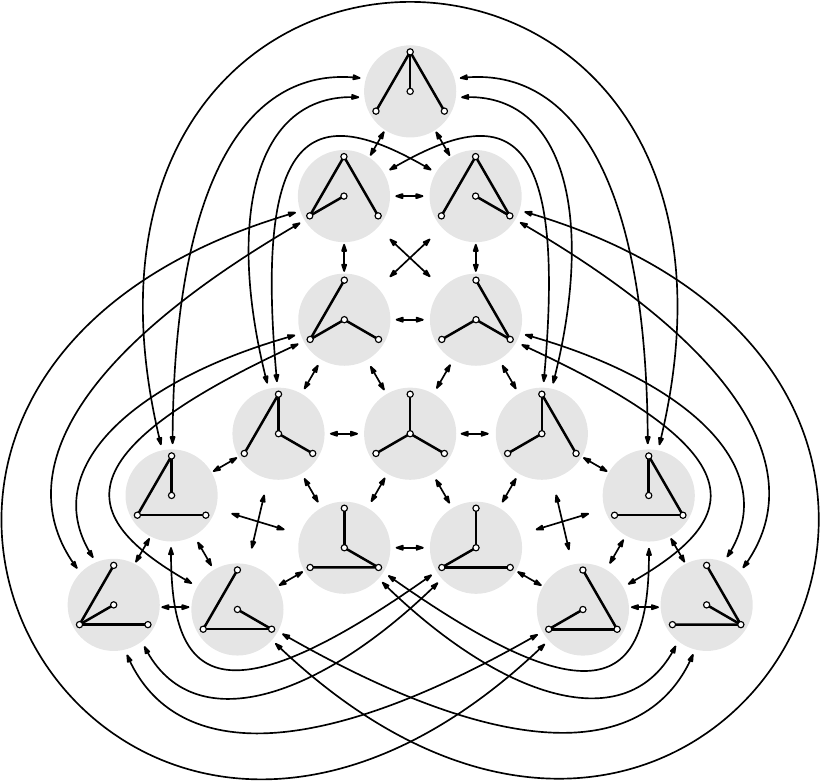}
    \caption{Flip graph of plane spanning trees on $n=4$ points in non-convex position. Puzzle:~Can you find all 120 different possibilities how the diameter of this flip graph can be realized?}
    \label{fig:intro}
\end{figure}

Recent research has focused on (1) the \emph{diameter} of the flip graph, that is, the maximum flip distance between any pair of trees on convex point sets and (2) algorithmic aspects of the computation of the flip distance.
For over two decades, the benchmark results for bounds on the diameter were an upper and a lower bound of $2n-4$~\cite{AVIS199621} and $\lfloor\frac{3n}{2}\rfloor-5$~\cite{HERNANDO199951}, respectively.
The upper bound on the diameter was later improved to $2n-\log(n)$~\cite{AichholzerBBDDKLLTU24}, and shortly after to $2n-\sqrt{n}$~\cite{bousquet2023notes}.
In~\cite{bousquet2025reconfiguration}, the longstanding $2n$ barrier was surpassed by proving an upper bound on the diameter of $\approx1.96n$.
The currently best known bounds are a lower bound of $(\frac{11}{7}-o(1))n$ by Bjerkevik, Dorfer, Kleist, Ueckerdt, and Vogtenhuber~\cite{bjerkevik2026flippingHardness}, and an upper bound of $\frac{5n}{3}-3$ by Bjerkevik, Kleist, Ueckerdt, and Vogtenhuber~\cite{bjerkevik2025flipping}.

One potentially useful tool for tackling the above two questions is the set of edges common to the start and target tree, known as \emph{happy edges}.
All other edges are referred to as \emph{unhappy}.
Building on this concept, the following long-standing conjecture is considered folklore.

\begin{conjecture}[Happy edge conjecture]
    \label{conj:happy}
    Let $T_I$ and $T_F$ be two plane spanning trees, and let $H=T_I\cap T_F$ be the set of all happy edges.
    Then there exists a shortest flip sequence~$T_I=T_0,T_1,\ldots,T_{k-1},T_k=T_F$ such that $H\subset T_i$ for every $i\in[0,k]$.
\end{conjecture}

Most of the aforementioned improvements to the diameter bounds are based on paths that respect this conjecture, that is, they never flip happy edges.
Specifically, the authors of~\cite{AichholzerBBDDKLLTU24} present their $2n-\log(n)$ bound as follows:
\begin{quote}
    \emph{``...happy edges do not flip. An unhappy edge of $T_I$ flips once or twice, and if twice, it flips to the convex hull and then to an edge of $T_F$.''}
\end{quote}

Moreover, they define as \emph{parking edges} those edges that occur in some tree along a flip sequence but are absent from both the start and final tree and conjecture that parking on the convex hull is sufficient:

\begin{conjecture}[Parking edge conjecture]
    \label{conj:parking}
    For any two plane spanning trees $T_I$ and $T_F$, there exists a shortest flip sequence from $T_I$ to $T_F$ that uses only edges of $T_I\cup T_F\cup\CH$.
\end{conjecture}

Although it was not explicitly stated as a conjecture, the above quote suggests that there always exists a shortest flip sequence in which every unhappy edge is flipped either once or twice, motivating the following conjecture.

\begin{conjecture}[Reparking conjecture]
    \label{conj:reparking}
    For any two plane spanning trees $T_I$ and~$T_F$, there exists a shortest flip sequence such that every parking edge is flipped to an edge of $T_F$. In~particular, every edge is flipped at most twice.
\end{conjecture}

This conjecture is further supported by the fact that, to date, every algorithm yielding a non-trivial upper bound on the diameter flips edges either directly to final edges, or first parks them on the convex hull, from where they are later flipped to final edges~\cite{AichholzerBBDDKLLTU24,AichholzerDV25,bjerkevik2025flipping,bousquet2025reconfiguration,bousquet2023notes}; we refer to~\cref{tab:algorithm-table} for further details.

\begin{table}[htb]
    \caption{Algorithms for upper bound constructions since 2022 and flip properties of conjectures they use. We highlight cells where algorithms use properties that are provably not always optimal.}
    \label{tab:algorithm-table}
    \centering
    \begin{tabular}{| c c | c c c |}
        \hline
        Upper bound [Reference]                      & Flip type        & \cref{conj:happy} & \cref{conj:parking}      & \cref{conj:reparking}    \\
        \hline
        $2n-\log(n)$  \cite{AichholzerBBDDKLLTU24}     & general flips    & Yes               & \cellcolor{lightgray}Yes                 & \cellcolor{lightgray}Yes                   \\
        $2n-\sqrt{n}$  \cite{bousquet2023notes}        & general flips    & No                & \cellcolor{lightgray}Yes & \cellcolor{lightgray}Yes \\
        $1.96n$      \cite{bousquet2025reconfiguration}    & general flips    & Yes               & \cellcolor{lightgray}Yes                 & \cellcolor{lightgray}Yes                   \\
        $\frac{5n}{3}-3$  \cite{bjerkevik2025flipping} & general flips    & Yes               & \cellcolor{lightgray}Yes                 & \cellcolor{lightgray}Yes                   \\
        $\frac{5n}{3}-2$  \cite{AichholzerDV25}        & compatible flips & Yes               & Yes                      & Yes                      \\
        $\frac{7}{4}(n-1)$ \cite{AichholzerDV25}     & rotations        & No                & No                       & Yes                      \\
        \hline
    \end{tabular}
\end{table}

Some preliminary results and structural insights towards the three conjectures include the following:
(i) The parking edge conjecture directly implies the happy edge conjecture~\cite{AichholzerBBDDKLLTU24}, and
(ii) the happy edge conjecture does not hold when the allowed flips are restricted to edge slides~\cite{AichholzerBBDDKLLTU24} or rotations~\cite{AichholzerDV25}; see~\cref{sec:preliminaries} for precise definitions.
In contrast, (iii) the parking edge conjecture (and consequently the happy edge conjecture) holds for compatible flips~\cite{AichholzerDV25}.
For the case of unrestricted flips, the conjectures had remained open.
The authors of~\cite{bjerkevik2025flipping}, who established the best currently known upper bound on the diameter, describe their upper bound construction as follows:
\begin{quote}
    \emph{``We note that the flip sequences we construct for \cite[Theorem 26, arXiv-Version]{bjerkevik2025flipping}, which is our strongest upper bound, have both these properties [\cref{conj:happy} and \cref{conj:parking}], providing some evidence in favor of the conjectures.''}
\end{quote}

\subsection{Our contributions}
We make progress towards the conjectures by uncovering structural insights in the different settings.
This, in turn, sheds new light on the properties of shortest flip sequences between plane spanning trees on convex point sets.
In particular, we obtain the following results.

First, we show a useful property of shortest flip sequences, namely that a flip is either final, or the removed edge would be crossed later in the sequence. 
Moreover, we show with \cref{thm:reparkingch} that the parking conjecture implies the reparking conjecture and that~\cref{conj:reparking} holds for compatible flips. Simultaneously, there always exists a shortest flip sequence in which every flip that removes a convex hull edge adds an edge of $T_F$.

\begin{restatable}[Final flip property]{theorem}{propFinalFlip}
    \label{prop:final-destination}
    For any two plane spanning trees $T_I$ and $T_F$, there exists a shortest flip sequence such that every flip is either final, or the removed edge is crossed by another edge later on.
\end{restatable}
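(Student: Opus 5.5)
We argue by an exchange argument on a shortest flip sequence $\flipSeq: T_I=T_0,\dots,T_k=T_F$, chosen among all shortest sequences to minimize the number of \emph{bad} flips. A flip $T_{i-1}\to T_i=T_{i-1}-e+f$ is bad if it is not final and $e$ is crossed by no edge of $T_{i+1},\dots,T_k$. Here a flip is final if $f\in T_F$ and $f$ is never removed afterwards. If a bad flip exists, we take the last one and build a sequence from $T_I$ to $T_F$ of length at most $k-1$, a contradiction. (If only length $k$ is reached, we instead argue that the number of bad flips decreases.)

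Since the flip at step $i$ is not final, let $j>i$ be the first step at which either $f$ is removed, $T_j=T_{j-1}-f+g$, or $e$ is re-added. If neither happens, then $f\in T_F$ and $f$ stays forever, so the flip would be final; when $f\notin T_F$ it must be removed at some point. The idea is to skip flip $i$, that is, keep $e$, and use the auxiliary trees
\[
T'_t \;=\; T_t - x_t + e \qquad (i\le t<j),
\]
where $x_t$ is an edge of $T_t$ on the fundamental cycle $C_t$ of $e$ in $T_t\cup\{e\}$. We start with $x_i=f$, which is valid because $T_i=T_{i-1}-e+f$ means $f$ lies on the $T_i$-path between the endpoints of $e$, and then $T'_i=T_{i-1}$. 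Planarity of every $T'_t$ is automatic: $e$ is crossed by no edge of $T_t$ for $t>i$ by assumption, and $T'_t\subseteq T_t\cup\{e\}$. Spanning-tree-ness holds precisely because $x_t\in C_t$.

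Next we update $x_t$ along the flips $T_{t}=T_{t-1}-a_t+b_t$ for $i<t<j$ and check that consecutive trees $T'_{t-1},T'_t$ differ by at most one flip.
\begin{itemize}
\item If $a_t\neq x_{t-1}$ and $x_{t-1}$ remains on the new cycle $C_t$, set $x_t=x_{t-1}$. The same flip $a_t\to b_t$ is then performed in $T'$.
\item If the flip breaks the cycle, i.e.\ $a_t\in C_{t-1}$, we re-choose $x_t$ on the new cycle $C_t$ so that $T'_t$ is $T'_{t-1}-a_t+b_t$ or equals $T'_{t-1}$. This uses the standard exchange property of fundamental cycles: after $a_t$ is removed, $b_t$ reconnects the two sides, so $C_t$ passes through $b_t$, and taking $x_t:=b_t$ gives $T'_t=T'_{t-1}$.
\end{itemize}
Thus each step $t$ costs at most one flip in the primed sequence.

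At step $j$ we close up in one of two ways. If $f=x_{j-1}$ is removed, $T_j=T_{j-1}-f+g$, then $T'_{j-1}-e+g=T_j$ is a single flip. If $e$ is re-added, $T_j=T_{j-1}-y+e$, then either $y=x_{j-1}$ and $T'_{j-1}=T_j$, or a single flip $x_{j-1}\leftrightarrow y$ suffices. In each case we replace $T_{i-1},T_i,\dots,T_j$ (which uses $j-i+1$ flips) by $T'_i,\dots,T'_{j-1},T_j$ (which uses at most $j-i$ flips), and then continue with $T_{j+1},\dots,T_k$ unchanged. This shortens the sequence, contradicting minimality.

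The main obstacle is the bookkeeping in the second step. We must show rigorously that $x_t$ can always be kept on the fundamental cycle of $e$ while each original flip costs at most one primed flip, in particular when $x_{t-1}$ itself is removed. The subtle subcase is $a_t=x_{t-1}$ with $a_t\neq f$, where the cycle must be re-routed through $b_t$. If the direct exchange fails there, the fallback is to restart the argument with the bad flip replaced by the flip removing $x_{t-1}$ and argue by induction on $j-i$.
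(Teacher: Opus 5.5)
\textbf{Where the argument breaks.} Your closing step assumes $x_{j-1}=f$, and your update rules do not guarantee this. Since $f$ is not removed before step $j$, the designated edge can only move off $f$ when $f$ drops out of the fundamental cycle of $e$. That happens when some $a_t\in C_{t-1}$ with $a_t\neq f$ is replaced by a $b_t$ that reconnects the endpoints of $e$ along a path avoiding $f$. In that case neither alternative in your second bullet is available. Choosing $x_t:=b_t$ gives $T'_t=T'_{t-1}$ only when $a_t=x_{t-1}$; here it gives $T'_t=T'_{t-1}-a_t+f$, which costs an extra flip. From then on $f$ is back in the primed trees. So at step $j$, going from $T'_{j-1}$, which contains both $e$ and $f$, to $T_j$, which contains neither, costs two flips, and the saving is gone. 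The subcase you flag as subtle, $a_t=x_{t-1}$, is actually harmless: it costs nothing.

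\textbf{Why better bookkeeping cannot fix it.} What your argument would really prove is that no shortest sequence contains a bad flip, and that is false. Take points $0,1,\dots,15$ in convex position. Let $T_I$ consist of the hull path $1,2,\dots,7$, the hull edges $\{8,9\},\{10,11\},\{11,12\},\{13,14\},\{14,15\},\{15,0\},\{0,1\}$, and the chords $\{0,8\},\{8,12\}$. Apply the flips
\[
\{0,1\}\to\{5,8\},\quad \{0,8\}\to\{1,15\},\quad \{8,12\}\to\{4,10\},\quad \{5,8\}\to\{6,9\}.
\]
All four are valid and lead to some $T_F$. The first flip is bad: $\{0,1\}$ is a hull edge, so it is never crossed, and $\{5,8\}$ is removed again at the end. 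Yet $d(T_I,T_F)=4$. Since $|T_I\setminus T_F|=3$, a $3$-flip sequence would have to start by removing one of $\{0,1\},\{0,8\},\{8,12\}$ and adding one of $\{1,15\},\{4,10\},\{6,9\}$. But $\{4,10\}$ and $\{6,9\}$ each cross both $\{0,8\}$ and $\{8,12\}$, and $T_I-\{0,8\}+\{1,15\}$ contains a cycle. Running your construction here shows the failure directly: $C_1$ passes through $\{0,8\}$ and $\{5,8\}$, but after the second flip the cycle of $\{0,1\}$ is the triangle $0,15,1$.

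\textbf{What is missing.} Because of this, a contradiction with the minimality of $k$ is unavailable. The theorem is genuinely existential, so you need a length-preserving transformation together with a well-founded progress measure. Your parenthetical fallback, that the number of bad flips decreases, is exactly this missing piece, and it is not argued. It is also not automatic. The re-routed window creates new non-final flips, such as $a_t\to f$, or $e\to x_{j-1}$ in the closing step, whose removed edge need not be crossed later. Deleting $x_t$ from the window trees can also destroy the crossing witness of an earlier flip.

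The paper avoids strict shortening. It normalizes with respect to $e$, always designating on $e$'s cycle the \emph{first} edge that will be removed later. Cycle-breaking steps are then absorbed, and $e$ ends up flipped exactly once, by a final flip, with no increase in length. It then argues by a minimal counterexample on the flip distance, moving that flip to the end and deleting it.
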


\vspace{-2ex}

\begin{restatable}{theorem}{reparkingch}
    \label{thm:reparkingch}
    Consider any two plane spanning trees $T_I$ and $T_F$ of a point set $P$.
    \begin{enumerate}
        \item There exists a shortest flip sequence such that any convex hull edge is flipped~at~most~once.
        \item In the compatible case, all edges are flipped at most twice \textup{\textsf{(Compatible reparking property)}}.
    \end{enumerate}
\end{restatable}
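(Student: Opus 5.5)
The plan is to derive both parts from the Final flip property (\cref{prop:final-destination}) together with one elementary observation: a convex hull edge is never crossed by any edge on $P$. I will read ``final'' as: the flip adds an edge of $T_F$ that is never removed afterwards. I also need the proof of \cref{prop:final-destination} to be an exchange argument on a shortest sequence that only rearranges existing flips. If so, it can be run inside any class of shortest sequences closed under that exchange.

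\textbf{Part 1.} Fix a shortest flip sequence $T_I=T_0,\dots,T_k=T_F$ with the final flip property. Consider a flip $T_{i-1}\to T_i$ that removes a hull edge $e$ and adds $f$. Since $e$ is never crossed later, the flip must be final, so $f\in T_F$ and $f$ stays until the end. It remains to rule out that $e$ is re-inserted at some step $j>i$ and removed once more. Among all shortest sequences with the final flip property, I choose one that minimises the total number of hull-edge removals, and suppose such a $j$ exists.

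I then attempt the following exchange.
\begin{itemize}
\item Delete the flip at step $i$.
\item Keep $e$ throughout $(i,j)$, using the trees $T'_t=T_t-f+e$ for $i\le t<j$.
\item Turn the flip $g\to e$ at step $j$ into $g\to f$.
\end{itemize}
Planarity of every $T'_t$ is automatic, because $e$ crosses nothing and $T'_t\setminus\{e\}\subseteq T_t$. The sequence keeps the same length and one fewer hull-edge removal, which contradicts minimality. Its final flips remain final, since $f$ is still added for good.

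\textbf{Main obstacle.} The difficulty is keeping every $T'_t$ a spanning tree. This requires $f$ to lie on the fundamental cycle of $e$ in $T_t$ for all $i\le t<j$. It holds at $t=i$, because $T_i=T_{i-1}-e+f$. It can break when an intermediate flip $h\to h'$ changes the cycle. My first attempt is to handle such a breaking step by swapping roles: let that flip remove $f$ instead of $h$ (or remove $h$ while $e$ takes over $f$'s role), and check that the counts still do not increase. If this local repair fails, I would instead fold the condition ``no hull edge is reinserted after being removed'' directly into the potential function used in the proof of \cref{prop:final-destination}, and redo that exchange argument with the extra condition.

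\textbf{Part 2.} In the compatible setting, the parking property is known to hold by~\cite{AichholzerDV25}, so there is a shortest compatible sequence using only edges of $T_I\cup T_F\cup\CH$. I would check that the exchanges of \cref{prop:final-destination} and of Part 1 preserve compatibility and the parking property. They only delete flips or replace one added or removed edge by $e$ or $f$, and $e$, being a hull edge, crosses nothing; this check is the delicate point. Granting it, Part 1 applies within this class of sequences.

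Now take any edge flipped in such a sequence.
\begin{itemize}
\item If it is flipped to an edge of $T_F$, that flip is final, so the new edge is never flipped again.
\item Otherwise it is parked on a hull edge. By Part 1 that hull edge is flipped at most once, and that flip is final because hull edges are never crossed.
\end{itemize}
Hence every edge is flipped at most twice, which is the compatible reparking property.
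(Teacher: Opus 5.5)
Your first observation in Part~1 is the paper's entire proof of that part: in a shortest sequence with the final flip property, every flip that removes a convex hull edge is final, because hull edges are never crossed. In the paper, ``flipped at most once'' is meant along traces: once a trace reaches a hull edge, at most one further flip follows, and that flip is final. This is also all that Part~2 uses, including your own second bullet, which only needs the flip leaving the hull position to be final.

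Your ``main obstacle'' is therefore one you imposed on yourself. Ruling out that the same hull segment is removed, re-inserted by some $g\to e$, and removed again is a strictly stronger, segment-level claim. The paper neither proves it nor needs it, so you should drop it. As written, the exchange would not prove it anyway. The tree $T_t-f+e$ is a spanning tree only while $f$ lies on the path of $T_t$ between the endpoints of $e$. An intermediate flip that replaces an edge of this path can reroute it around $f$, and then $T_t-f+e$ is disconnected. Neither the role swap nor the modified potential function is actually carried out.

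Part~2 follows the paper's argument, but your case split needs the trace form of the parking property from \cref{sec:preliminaries}: every intermediate position of a trace is a hull edge. The set form you quote, ``only edges of $T_I\cup T_F\cup\CH$'', does not suffice. It still allows a trace to pass non-finally through a diagonal of $T_F$, or to park on a diagonal of $T_I$, so neither bullet follows from it. With the trace form, a flip onto a diagonal is final, and a flip onto a hull edge (even one of $T_F$) is covered by your second bullet.

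The check you grant is that the final-flip construction can be run inside this class of compatible, hull-parking sequences. The paper skips it as well. Its key step, the normalization, does respect the class:
\begin{itemize}
\item Each normalized flip adds the same edge $g_{i+1}$ as the original flip.
\item It removes either $e$ or an edge of $T_i$ that the original sequence removes at this or a later step.
\item A compatibly added $g_{i+1}$ crosses no edge of $T_i$, and $e$ is crossed by nothing.
\end{itemize}
Hence compatibility survives, no new edges appear, and a diagonal that the original never removes after adding it is never removed in the normalized sequence either. The subsequent reordering leaves each flip's pair of edges unchanged.
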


Further, we refute \cref{conj:parking} by demonstrating that, for certain instances, algorithms that only introduce parking edges on the convex hull produce sequences that are far from optimal.
More precisely, we show:

\begin{restatable}{theorem}{noparking}
    \label{thm:noparking}
    There exists a family of trees $(T_{I,k})_{k\in\mathbb{N}}$ and $(T_{F,k})_{k\in\mathbb{N}}$ on $10k+2$ vertices such that the length of the shortest flip sequence that only uses parking edges from the convex hull is at least by $k$ flips longer than the shortest possible flip sequence.
\end{restatable}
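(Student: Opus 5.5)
The construction is a gadget on $12$ points, and the family is obtained by chaining $k$ copies of it.

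\textbf{Base gadget.} I would find a pair of trees $T_I,T_F$ on a small convex point set (about $12$ points) with two properties:
\begin{itemize}
\item there is a flip sequence of length $|T_I\setminus T_F|+1$ that uses exactly one parking edge $e$, and $e$ is an interior diagonal, not a hull edge;
\item every flip sequence whose parking edges lie in $\CH$ needs at least $|T_I\setminus T_F|+2$ flips.
\end{itemize}
The intuition is a ``blocked cyclic'' configuration. Several unhappy edges of $T_I$ each cross several edges of $T_F$, so some edge must be parked before anything final can be inserted. Parking on any hull edge either creates a cycle or leaves the remaining dependencies still blocked, which forces a second park. One well-chosen interior diagonal, by contrast, separates the crossing structure so that everything else can then flip directly to final edges.

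\textbf{Certifying the gadget.} The upper bound is an explicit sequence. For the lower bound I would argue structurally:
\begin{itemize}
\item Every unhappy edge must be flipped at least once, and each parking edge costs one extra flip.
\item By \cref{prop:final-destination} we may assume each non-final flip removes an edge that is crossed later.
\item By \cref{thm:reparkingch}(1) we may assume each hull edge is flipped at most once.
\end{itemize}
This leaves finitely many candidate first non-final flips in the hull-only regime, and I would exclude the one-park possibility by case analysis: after any single hull park, some remaining dependency still has no valid direct flip. Since the gadget is small, this could also be confirmed by exhaustive computer search of the flip graph restricted to $T_I\cup T_F\cup\CH$.

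\textbf{Gluing $k$ copies.} The gadget should have $10$ ``private'' vertices and share boundary points with its neighbours, which gives $10k+2$ points in total. I would place the copies consecutively along the convex polygon and connect them through shared happy edges, for instance a long happy edge or a shared happy path. The upper bound then comes from concatenating the per-gadget sequences, for a total of $\sum_j(|U_j|+1)$ flips. For the lower bound in the hull-only setting, I need to show that each gadget independently forces two parking flips, which gives a total of at least $\sum_j(|U_j|+2)$ and hence a gap of $k$.

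\textbf{Main obstacle: the lower-bound additivity.} The problem is that flips can cross gadget boundaries, so an edge parked in one gadget might help another, and happy separator edges might themselves be flipped. To rule this out, I would choose the connection so that the gadgets are separated by happy edges $h_j$ with the following property: any hull-only sequence that flips some $h_j$ must spend an additional flip to restore it, since $h_j$ is neither a hull edge nor re-insertable without cost. A charging argument then assigns to each gadget either its two parking flips or a separator-flip plus restore. Formalizing this charging is where I expect most of the work to lie.
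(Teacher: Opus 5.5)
Your architecture matches the paper's. You use a $12$-point gadget where an optimal sequence of length $8$ parks once on a diagonal, while any sequence parking only on $\CH$ needs $9$ flips, certified by case analysis or computer search. You then glue $k$ copies along happy edges on $10k+2$ points and get the $8k$ upper bound by concatenation. The genuine gap is the step you flag yourself: the $9k$ lower bound for hull-only sequences on the glued instance.

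\textbf{Why the charging fails.} Your charging scheme is not yet an argument, and as sketched it presupposes what it should prove. Flipping a separator $h_j$ costs only one flip beyond the trivial bound, because its trace becomes non-trivial. Yet $h_j$ borders two gadgets, each of which must be charged one extra flip. To make the charging balance you would need two facts:
\begin{itemize}
\item a separator flip can benefit at most one neighbour;
\item a gadget that receives no separator charge still needs two parks, even though its neighbouring separator was flipped.
\end{itemize}
That is precisely the cross-gadget independence in question. Nothing in your proposal bounds what removing a separator can save.

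\textbf{The missing idea.} It is short, and you already have the tool. In any sequence whose parking edges all lie in $\CH$, every edge that ever appears is in $T_I\cup T_F\cup\CH$. None of these crosses a happy edge: edges of $T_I$, resp.\ $T_F$, lie in a common plane tree with it, and hull edges cross nothing. So the separators are never crossed. Hence the normalization from the proof of \cref{prop:final-destination} (Proposition~18 of \cite{AichholzerBBDDKLLTU24}) applies to any separator $h$ that gets removed:
\begin{itemize}
\item every normalized tree is a subset of $T_i\cup\{h\}$, so no new (in particular no diagonal) parking edge appears, and already-fixed separators stay fixed;
\item $h$ itself becomes fixed;
\item at each step where $h$ was re-inserted, two consecutive normalized trees coincide, so the sequence becomes strictly shorter.
\end{itemize}
Iterating over the separators gives a hull-only sequence that is no longer and never flips a separator. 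It therefore splits into $k$ independent hull-only gadget sequences, each of length at least $9$, for a total of at least $9k$; the paper phrases this as a pigeonhole contradiction. No charging is needed.

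\textbf{A smaller point.} When certifying the gadget you cannot cite \cref{prop:final-destination} or \cref{thm:reparkingch} as stated. They assert properties of some globally shortest sequence, and a shortest hull-only sequence is here, by design, not globally shortest. The paper instead observes that every edge of $T_I\Delta T_F$ is crossed at least twice, so the first flip must be a park. A length-$8$ hull-only sequence would then be one hull park followed only by final flips, which is exactly what the case analysis excludes.
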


Finally, we disprove \cref{conj:reparking} for the unrestricted case, obtaining the following:

\begin{restatable}{theorem}{manyreparking}
    \label{thm:manyreparking}
    There exist pairs of plane spanning trees $T_I$ and $T_F$, and $T'_I$ and $T'_F$ on $22$ and $32$ points, respectively, such that any shortest flip sequence from $T_I$ to $T_F$, respectively $T'_I$ to $T'_F$, flips the same edge $3$, respectively $4$, times.
\end{restatable}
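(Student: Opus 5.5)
The proof is constructive. I first build a gadget on $22$ points forcing some edge to be flipped three times in every shortest flip sequence. I then extend it to $32$ points to force four flips of one edge. The whole argument has two halves. The first is an upper bound: an explicit flip sequence of some length $d$ in which the designated edge $e$ is flipped three times, respectively four times. The second is a matching lower bound $d$ on the flip distance, together with a proof that every flip sequence of length exactly $d$ flips $e$ that often.

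\textbf{Gadget design.} The informal idea is this. By the Final flip property (\cref{prop:final-destination}), every non-final flip removes an edge that is crossed later, so parking edges exist only to make room temporarily. I would design $T_I$ and $T_F$ so that several "blocks" of unhappy edges must be resolved in a forced order. Between consecutive blocks, one edge $e$ serves as a cheap connector: it must be moved away from block~1's final edges, then it temporarily occupies a position crossing block~2's final edges, and only afterwards reaches its target.

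The counting behind the gadget uses the standard lower bound. The flip distance is at least $|T_I\setminus T_F|$ plus the number of unhappy initial edges that must be flipped at least twice. An initial edge must be flipped twice when it crosses every final edge it could be exchanged with at the moment it is removed. The gadget is tuned so that the sequence flipping $e$ repeatedly achieves exactly this bound. Every alternative, for instance parking $e$ on the convex hull or parking a different edge, forces at least one additional edge to be flipped twice, which costs one more flip. By Theorem~\ref{thm:reparkingch}, a convex hull edge is flipped at most once in some optimal sequence. Hence the intermediate positions of $e$ must be non-hull diagonals, and this guides where I place the blocks.

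\textbf{Lower-bound verification.} For a fixed small instance, the cleanest rigorous route is exhaustive computation. I would run a BFS, or a bidirectional BFS, in the flip graph on $22$ points. I would restrict it to sequences of length at most $d$ and prune using the lower bound above, applied to the remaining distance from each intermediate tree. This yields the distance $d$ and all shortest sequences, so one can check that each of them flips $e$ three times. The flip graph on $22$ points is huge, so the pruning is essential. I would also use the symmetry of the gadget and the fact, shown via \cref{prop:final-destination}, that some shortest sequence has each non-final flip removing an edge that is crossed later. Note, however, that the statement concerns \emph{every} shortest sequence, so the search must not assume that structure. It should only use admissible (sound) distance bounds.

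For a human-readable alternative, I would argue by case analysis on which edge is flipped first within each block. I would show that any deviation leaves a configuration whose remaining lower bound exceeds the remaining budget.

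\textbf{Extension to $32$ points.} I would chain one more block of $10$ points, which accounts for the count $22+10=32$. This block forces $e$ to take one additional intermediate position, and I would repeat the verification.

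\textbf{Main obstacle.} The main obstacle is the "any shortest sequence" claim. Exhibiting a short sequence is easy, but excluding all equally short alternatives requires either a carefully pruned exhaustive search or a delicate counting argument. I would first attempt the computer search with an A*-style admissible heuristic.
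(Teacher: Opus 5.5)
Your proposal has a genuine gap: it never exhibits the two pairs of trees. The statement is a pure existence claim about concrete instances on $22$ and $32$ points, so its proof \emph{is} the instances plus a certificate for them. Describing what a gadget should do (blocks resolved in a forced order, a ``cheap connector'', one more block of $10$ points) does not show that such trees exist. The count $22+10=32$ is only numerology until the blocks are specified.

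The paper supplies the instances explicitly (\cref{fig:reparking_counterexample,fig:manyparking}). On $22$ points, the central red diagonal is flipped to a green diagonal, and six perfect flips follow. The green edge is then flipped to another diagonal, six more perfect flips follow, and finally it is flipped to its target. This gives $15$ flips, exactly two of them non-perfect.

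Your design heuristics are also too weak for the ``every shortest sequence'' requirement. By \cref{prop:reparkingsetup}, for $\trace(e)=(e,e_1,e_2,e_3)$ to be forced, $e$ and $e_3$ must be diagonals, and each of the pairs $e,e_1$; $e_1,e_2$; $e_2,e_3$ must cross. Otherwise \cref{lem:prehappy3} together with \cref{prop:final-destination} gives a shortest sequence in which $e$ has trace length $2$. You only ask that the intermediate positions be non-hull diagonals crossing the next block's target edges, which does not exclude these shortcuts. The paper's construction is built around exactly these crossing conditions.

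Your verification plan is essentially the paper's, but the pruning bound you state is not sound as written. Both ``the number of unhappy initial edges that must be flipped at least twice'' and the criterion about crossing every final edge ``at the moment it is removed'' depend on the sequence, so they are not an admissible estimate. The correct certificate counts non-perfect flips. Every edge of $T_F\setminus T_I$ must be inserted at least once, so any sequence of length $|T_F\setminus T_I|+p$ contains at most $p$ non-perfect flips. An exhaustive search over sequences of length at most $|T_F\setminus T_I|+p$ with at most $p$ non-perfect flips therefore does two things: it proves that $T_F$ is unreachable in fewer flips, and it enumerates all shortest sequences. Using $|T_F\setminus T|$ as your A* heuristic would also be sound.

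For the $22$-point instance the paper runs this search with $p=2$. The relevant subgraph has $5086$ trees, and exactly $400=\binom{6}{3}^2$ shortest sequences are found, all containing the trace-length-$3$ edge; the $32$-point instance is checked the same way. So the search is small and your plan would work, but only once you actually supply instances satisfying the crossing conditions.
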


\subparagraph*{Outline.} We begin by summarizing additional related work in \cref{sec:furtherrelated} and general definitions in~\cref{sec:preliminaries}.
We then present our positive structural results in \cref{sec:reparking-compatible-flips}, before showing that, in general, neither the parking nor the reparking property holds for unrestricted flips in~\cref{subsec:parking-edges,subsec:reparking-edges}, respectively.

\smallskip
Due to space constraints, proofs of statements marked with ($\star$) are given in the appendix.

\subsection{Further related work}\label{sec:furtherrelated}
Flip graphs of non-crossing configurations on planar point sets have been widely studied for other graph classes, including triangulations~\cite{aichholzer2015flip,eppstein2010happy,felsner2018rainbow,felsner2020rainbow,hurtado1996flipping,kanj2017computing,lubiw2015flip,pilz2014flip,POURNIN201413,SleatorTT86}, spanning paths~\cite{aichholzer2025lineartimealgorithmfinding,2023Aichholzer,akl2007planar,KleistKR24,ru-hcpgs-01}, and matchings~\cite{aichholzer2025flipping,aichholzer2025flips,binucci2025flipping,hernando2002graphs,HouleHNR05}; see also the survey by Bose and Hurtado~\cite{bose2009flipsinplanar}.
Several of the aforementioned works rely on structural observations similar to the ones we present for plane spanning trees.

In~\cite{SleatorTT86}, the authors establish the happy edge property for triangulations of convex point sets.
Moreover, they demonstrate that a greedy strategy that adds edges of a target triangulation whenever possible, always produces a shortest flip sequence.
These insights underpin several \FPT-algorithms~\cite{bosch2023improved,cleary2002restricted,li20253,lucas2010improved}, when the flip distance is a parameter.
Conversely, the happy edge property does not hold for triangulations of point sets in general position~\cite{bose2009flipsinplanar}, and counterexamples to it were key to the \NP-hardness proofs in the case of point sets in general position~\cite{lubiw2015flip,pilz2014flip}, as well as simple polygons~\cite{aichholzer2015flip}.

Similarly, the flip distance problem for perfect matchings on point sets in general position has been proven to be \NP-hard~\cite{binucci2025flipping}.
The proof likewise relies on counterexamples to the happy edge property.
When the point set is restricted to be convex, the problem is polynomial-time solvable, as an edge from the target matching can always be inserted greedily~\cite{hernando2002graphs}.

Although the happy edge property does not hold, shortest flip sequences between plane paths on convex point sets can still be computed in polynomial time~\cite{aichholzer2025lineartimealgorithmfinding}; a key ingredient of the proof is the partitioning of happy edges into ``good'' and ``bad'' happy edges, which are then treated separately.
Sometimes it is sufficient to consider only the happy edges on the convex hull;
for example, convex hull edges satisfy the happy edge property in problems such as plane spanning trees~\cite{AichholzerBBDDKLLTU24}, or odd matchings~\cite{aichholzer2025flipping} on convex point sets.
In both cases, this observation played a crucial role in deriving related \FPT-algorithms~\cite{aichholzer2025flipping,AichholzerDV25}.

Beyond complexity results, counterexamples to the happy edge property have been generalized to establish a quadratic lower bound on the sliding distance between plane spanning trees on convex point sets~\cite{aichholzer2007quadratic}.

\section{Preliminaries}
\label{sec:preliminaries}
Throughout this paper, we denote by $P\subset \mathbb{R}^2$ a set of $n$ points in convex position, that is, a~set $P$ such that no point $p\in P$ is a convex combination of any three other points.
For a pair of points $u,v\in P$, we denote by $uv$ the straight-line segment connecting the two.
If $u$ and $v$ are adjacent on the convex hull of $P$, we call $uv$ a \emph{convex hull edge} and write $uv\in \CH$.
A \emph{plane spanning tree} of $P$ is then a set of $n-1$ edges $T\subset \binom{P}{2}$ that forms a tree such that the edges' line segments are pairwise disjoint or meet at a common endpoint.
We denote the removal and insertion of an edge $e$ by $T-e$ and $T+e$, using standard set notation otherwise.

\begin{figure}[htb]%
    \centering%
    \includegraphics[page=1]{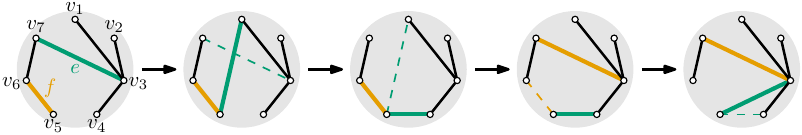}%
    \caption{%
        A flip sequence \flipSeq with ${\trace(e,\flipSeq)=(e,\:{v_1v_5},\:{v_4v_5},\:{v_3v_5})}$ and ${\trace(f,\flipSeq)=(f,\:v_3v_7)}$.
        The four flips are, in that order, crossing, a rotation, compatible, and an edge slide.
    }
    \label{fig:intro-flips}%
\end{figure}

\subparagraph*{Flip.}
A \emph{flip} in $T$ removes an edge $e\in T$ and adds an edge $f\notin T$ such that the resulting structure $T-e+f$ is again a plane spanning tree.
If the initial tree $T$ is clear from context, we denote this operation by $e\rightarrow f$ and write that \emph{$e$ is flipped to $f$}.
We differentiate between four types of flips; a flip $e\rightarrow f$ is \emph{compatible} if the straight-line segments $e$ and $f$ are either disjoint or share a common endpoint, and \emph{crossing} otherwise.
For $u,v,w\in P$, a compatible flip of the shape $uv\rightarrow uw$ is called a \emph{rotation}, and furthermore called a \emph{slide} if $vw\in T$.
Examples of all types of flips can be seen in~\cref{fig:intro-flips}.

\subparagraph*{Flip graph.}
The \emph{flip graph} of plane spanning trees of $P$ is the graph that has all such trees as its vertices.
Two spanning trees share an edge in the flip graph if and only if they can be transformed into one another via a single flip.
Given a pair of trees $T_I$ and $T_F$, a~\emph{flip sequence}~$\flipSeq$ from~$T_I$ to $T_F$ is then a path in the flip graph denoted by
a sequence of trees~${T_I=T_0,T_1,\ldots,T_{k-1},T_k=T_F}$ such that consecutive trees only differ by a single flip; its length is the number $k$ of flips in the sequence.
The \emph{flip distance} of $T_I$ and $T_F$, denoted by $d(T_I,T_F)$, is the length of a shortest sequence.
Furthermore, the edges $T_I\cap T_F$ are \emph{happy}, with all other edges being \emph{unhappy}.
If a flip introduces an edge~$f\notin T_F$, it is commonly referred to as a \emph{parking flip}, and $f$ as its \emph{parking edge}.
The~\emph{diameter} of a flip graph denotes the maximum flip distance between any pair of trees.

\subparagraph*{Trace.}
We introduce additional notation to analyze flip sequences.
Namely, we introduce structures that allow us to not only represent changes within trees under flip operations, but also how these flip operations affect individual edges.
Given a flip sequence $\flipSeq=T_0,\ldots, T_k$, we define the \emph{trace} of an edge $e\in T_0$ as the ordered sequence of positions that it is flipped to; an example is shown in~\cref{fig:intro-flips}.
Formally, this corresponds to a tuple of edges:
\[
    \trace(e,\flipSeq)=\trace(e, T_0,\ldots,T_k) =
    \left\{\begin{array}{@{}l@{}l@{\quad}l}%
                (e) &&\text{if $k=0$,} \\ 
               &\trace(e,T_1,\ldots, T_k)&\text{if $e\in T_0\cap T_1$, or}\\
               (e)\parallel\:&\trace(f,T_1,\ldots,T_k)&\text{if $\{e,f\}=T_0\Delta T_1$.}
    \end{array}\right.
\]
We say that the \emph{length of a trace} is the number of flips in it, which corresponds to the number of elements minus one: The trace $(e_1,\ldots, e_{\ell})$ corresponds to the flips $e_i\to e_{i+1}$ for $1\leq i< \ell$.
Observe that the length of $\flipSeq$ is exactly the sum of all trace lengths. 
A flip~$e\to f$ in $\flipSeq$ is \emph{final} if the trace of $f$ on the remaining sequence is $(f)$, and \emph{non-final} otherwise.
A~happy edge $e\in T_0\cap T_k$ is \emph{fixed} in the flip sequence~$\flipSeq$ if it has trace length zero, or equivalently $\trace(e,\flipSeq)=(e)$.
This induces two properties of happy edges based on shortest flip sequences. 
A happy edge is \emph{weakly fixed} if it is fixed in some shortest flip sequence, and \emph{strongly fixed} if this applies to all such~sequences.
\medskip

We note that~\cref{conj:happy,conj:parking,conj:reparking} can also be stated in terms of trace properties:
The happy edge conjecture (\cref{conj:happy}) states that \emph{all happy edges are weakly fixed}.
Similarly, the parking conjecture (\cref{conj:parking}) states that there always exists a shortest flip sequence in which \emph{all but the first and final element of all traces are in \CH}, and the re-parking conjecture~(\cref{conj:reparking}) states 
the existence of a shortest flip sequence in which each edge of $T_I$ has trace length at most two.
\section{Positive results on structural properties}
\label{sec:reparking-compatible-flips}

We generalize a property shown by Aichholzer, Dorfer, and Vogtenhuber~\cite[Lemma 20]{AichholzerDV25}.
They showed that in every compatible flip sequence any parking flip that parks on a diagonal can be replaced with a parking flip that parks on the convex hull.
Actually, it is not required that the entire flip sequence is compatible, but only either the flip that inserts the parking edge or the flip that removes the parking edge has to be compatible.
This already follows from their proof.
For the sake of self-containment, we include the adapted proof and rewrite the statement in terms of our definitions.

\begin{restatable}[$\star$]{lemma}{lemmaPreHappy}
    \label{lem:prehappy3}
    Let there be a flip sequence between two trees $T_I$ and $T_F$ and an edge~$e$ such that $trace(e)=(e,f,e')$, where $f$ is a diagonal and either the flip $e\to f$ or the flip~${f \to e'}$ is compatible.
    Then there exists a flip sequence from $T_I$ to $T_F$ of the same length in which $trace(e)=(e,h,e')$ such that $h$ is a convex hull edge. The trace of any other edge remains~unchanged.
\end{restatable}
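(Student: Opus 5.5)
The proof is local: we keep every flip of the sequence except $e\to f$ and $f\to e'$, and replace the parking position $f$ by a convex hull edge $h$. Let the flip $e\to f$ occur at step $i$ and the flip $f\to e'$ at step $j>i$. Then $T_i=T_{i-1}-e+f$ and $T_j=T_{j-1}-f+e'$. For every $i\le t<j$ write $T_t=F_t+f$, where $F_t$ is a forest with two components. These two components are the same vertex partition $(A,B)$ for all $t$ in this range. Indeed, every intermediate flip is a flip of a tree that contains $f$ and does not remove $f$, so it exchanges two edges of $F_t$ while keeping the spanning-tree property, and hence preserves the partition $(A,B)$. Moreover, $A$ and $B$ are exactly the sides of the cut determined by $f$ in each $T_t$. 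Both $e$ and $e'$ cross this cut: $e$ does so because $T_{i-1}=F_i+e$, and $e'$ because $T_j=F_{j-1}+e'$.

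We want a convex hull edge $h$ joining $A$ and $B$ that crosses no edge of any $F_t$ with $i\le t<j$. If we find one, then every $F_t+h$ is a plane spanning tree. Replacing each $T_t$ by $F_t+h$ keeps all intermediate flips valid, since each differs from the original only by the common edge. The flips $e\to h$ and $h\to e'$ are valid as well, and no other trace changes. The length of the sequence is also unchanged.

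To choose $h$, we use the compatible flip. Suppose first that $e\to f$ is compatible, so $e$ and $f$ do not cross. The diagonal $f=ab$ splits the convex polygon into two sides, and $e$ lies weakly within one of them, say side $S$. Since $F_i$ together with $f$ is plane, every edge of $F_i$ lies weakly on one side of $f$. Restrict to the points weakly on side $S$. These form a convex chain from $a$ to $b$ along the hull. The edges of $F_i$ on this side, together with $e$ (which connects $A$ and $B$ there), still span a connected structure on the chain when $f$ is added. Walk along the hull chain from $a$ to $b$. Since $a$ and $b$ lie in different classes, say $a\in A$ and $b\in B$, there is some consecutive pair $u,w$ on the chain with $u\in A$ and $w\in B$. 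Then $h=uw$ is a convex hull edge crossing the cut. A convex hull edge crosses no segment at all, so $F_t+h$ is plane for every $t$. Hence the argument does not even need the compatibility assumption for planarity. A hull edge between $A$ and $B$ exists whenever both classes are nonempty, because walking around the whole hull we must switch classes somewhere.

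So the compatibility assumption seems unnecessary for planarity, which is suspicious. The catch is that $h$ must not already be an edge of $F_t$; it is not, since $h$ joins $A$ to $B$ while $F_t$ has no such edge. The only remaining subtlety I expect is the definition of a flip: it requires $f\notin T$, which holds here. I would check carefully whether some other hidden constraint, such as the requirement $h\neq e,e'$, needs the compatibility. If $h=e$, the step $e\to h$ would not be a flip. The compatibility assumption should be what guarantees the existence of a valid choice: there are at least two class changes around the hull, giving two candidate hull edges. I would use compatibility of $e\to f$ (respectively $f\to e'$) to show that not both candidates can coincide with $e$ (respectively $e'$) in a degenerate way, or that a coincidence allows deleting the two flips altogether, which shortens the sequence and then stays valid after padding. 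This degenerate case is the main obstacle; the symmetric case in which $f\to e'$ is compatible follows by reversing the sequence.
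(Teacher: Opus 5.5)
\textbf{The gap.} The step that fails is your claim that the components of $F_t=T_t-f$ form the same bipartition $(A,B)$ for all $i\le t<j$.

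An intermediate flip $g\to g'$ keeps $f$, but its fundamental cycle in $T_t$ may pass through $f$. Then $g'$ joins some $x\in A$ to some $y\in B$. If $g$ lies on the $x$--$a$ path inside $A$, the components of $F_{t+1}=F_t-g+g'$ become $A_a$ and $A_x\cup B$. Here $A_a$ and $A_x$ are the two pieces into which deleting $g$ splits the component $A$.

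Concretely, take a convex hexagon $1,\ldots,6$ with $f=14$ and $T_t=\{14,12,23,45,56\}$. The flip $12\to 34$ is valid and produces the plane spanning tree $\{14,23,34,45,56\}$. It changes the bipartition from $\{1,2,3\}\mid\{4,5,6\}$ to $\{1\}\mid\{2,\ldots,6\}$. The hull edge $34$ was a legitimate choice of $h$ before this flip, but afterwards it is an edge of $F_{t+1}$, so the substituted step would add an edge that is already present.

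Planarity of $F_t+h$ is indeed automatic, but the spanning-tree property is not: a hull edge chosen at time $i$ need not reconnect $F_t$ later. Your observation that compatibility was never used is a symptom of exactly this gap, not of a degenerate case $h\in\{e,e'\}$. An argument that works without compatibility would convert the forced diagonal parking $e_1\to p_1$ of \cref{thm:fallacy} into hull parking at no extra cost, which that lemma (and \cref{thm:noparking}) rules out.

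\textbf{The missing idea.} You need to \emph{reorder} the intermediate flips according to the two geometric sides of $f$; this is where compatibility is used.

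\begin{enumerate}
\item While $f$ is present it is uncrossed, so each intermediate flip happens entirely within one closed side of $f$, and flips on different sides are independent.
\item If $e\to f$ is compatible, no edge of $T_{i-1}$ crosses $f$. Hence the path $p$ from $a$ to $b$ in $T_{i-1}$ lies in one side $S'$ and contains $e$. Call the other side $S$.
\item Perform all intermediate flips in $S$ first, starting from $T_{i-1}$, with $p$ replacing $f$ in every cycle.
\item Side $S$ now agrees with $T_{j-1}$, whose part in $S$ minus $f$ consists of two trees $T\ni a$ and $T'\ni b$. Take $h$ to be the hull edge on the $S$-chain joining them, and flip $e\to h$. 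This is valid since $e$ lies on $p$.
\item Side $S$ is now frozen as $T\cup T'\cup\{h\}$. During the remaining flips in $S'$, the path through $T$, $h$, $T'$ replaces $f$ in every cycle.
\item A final flip $h\to e'$ reaches $T_j$.
\end{enumerate}

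The case that $f\to e'$ is compatible follows by reversing the sequence. Freezing one side is precisely what makes a single hull edge a permanent substitute for $f$, which your purely local substitution cannot guarantee.
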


In order to study the reparking property, we start by observing sufficient properties for edges to have trace length $1$ or, equivalently, necessary conditions for an edge to have larger trace length.

\propFinalFlip*
\begin{proof}

    \begin{figure}[htb]%
        \captionsetup[subfigure]{justification=centering}%
        \begin{subfigure}[t]{\columnwidth/3}%
            \centering%
            \includegraphics[page=1]{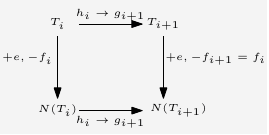}%
            \subcaption{}
            \label{fig:same}
        \end{subfigure}%
        \begin{subfigure}[t]{\columnwidth/3}%
            \centering%
            \includegraphics[page=2]{figures/Destination_final}%
            \subcaption{}
            \label{fig:different}
        \end{subfigure}%
        \begin{subfigure}[t]{\columnwidth/3}%
            \centering%
            \includegraphics[page=3]{figures/Destination_final}%
            \subcaption{}%
            \label{fig:last}
        \end{subfigure}%
        \caption{Commutative diagrams on how flips may differ. Horizontal arrows represent flips and are therefore labeled in flip notation. Vertical lines represent the normalization operation and are denoted in set notation since they are not flips in the traditional sense. From left to right: (a) the edges $h$ and $g$ do not lie on the same cycle as $e$; (b) the two edges $f$ and $g$ lie on the same cycle as $e$; and (c) the last flip on the cycle that contains $e$.}
        \label{fig:final}
    \end{figure}

    For a contradiction, let $T_I$ and $T_F$ be a pair of trees where (1) any shortest flip sequence between them contains an edge $e$ with trace length more than 1 such that $e$ is not crossed by any other edge of the flip sequence, and (2) the flip distance between $T_I$ and $T_F$ is minimal over all pairs of trees with such shortest flip sequences.
    Let $T_I=T_0$, $T_1,\ldots, T_k = T_F$ be such a flip sequence from $T_I$ to $T_F$.
    We can assume without loss of generality that the trace of $e$ is $(e,f,e')$ and $e \in T_I = T_0$ and $e' \in T_F = T_k$, as otherwise we could consider a closer pair and shorten the flip sequence.
    In other words, the flip $e\to f$ happens from $T_I=T_0$ to $T_1$ and the flip $f\to e'$ is the last flip, namely from $T_{k-1}$ to $T_k = T_F$.
    
    For our purposes we apply a technique called \emph{normalization} as introduced for triangulations in~\cite[Lemma~3]{SleatorTT86}.
    This concept has since been modified for matchings~\cite{hernando2002graphs} and trees~\cite[Proposition~18]{AichholzerBBDDKLLTU24}, which we further adapt.
    For trees, a normalization is applied to a flip sequence and defined with respect to an edge $e$ that is not crossed by any other edge that is part of this flip sequence.
    A normalization maps a tree $T$ to another tree $N(T)$ such that the mapping $N(\cdot)$ only depends on the original flip sequence and $e$.
    For the original flip sequence $\flipSeq=T_0$, $T_1,\ldots, T_k$ and an edge $e$ we want the normalization $\flipSeq_N = N(T_0)$, $N(T_1),\ldots,N(T_{k-1})$, $N(T_k)$ to have the following properties:
    \begin{enumerate}[\quad (1)]
        \item $N(T_0)=T_0$ and $N(T_k) = T_k$.
        \item For two consecutive trees $T_i$, $T_{i+1}$, we have that $N(T_i)$ and $N(T_{i+1})$ either coincide or only differ in a single flip.
        \item The $\trace(e,\flipSeq_N)$ has length $1$.
    \end{enumerate}
    Note that in Property (2) for our sequence $\flipSeq$ it cannot happen that two consecutive normalized trees coincide, as this would give us a shorter flip sequence and thus a contradiction.

    We now explain how we obtain this normalization for a tree $T_i$ of a flip sequence w.r.t.\ $e$.
    If $e$ is already part of $T_i$, then we simply set $N(T_i) = T_i$.
    Otherwise, we add $e$ to $T_i$.
    Then~$T_i$ contains a cycle $c_i$ with $e$ on it.
    Define $N(T_i)=(T_i\cup \{e\})\setminus\{f_i\}$ where~$f_i$ is either the first edge on $c_i$ that is removed during the remaining flip sequence from $T_i$ to~$T_k$ or, if no such edge exists, set $f_i=e$.

    We now argue that $N(\cdot)$ has all the desired properties.
    For Property (1), we observe that $N(T_0)$ contains $e$ and by construction $N(T_0)=T_0$.
    If $e$ is in $T_k$, by the same argument it follows that $N(T_k) = T_k$.
    If $e$ is not in $T_k$, we obtain the described plane cycle $c_k$ by adding $e$.
    Since the flip sequence ends at $T_k$, no edge of $c_k$ gets removed in a later point of the flip sequence, and we remove $e$ again to obtain $N(T_k)$.
    We conclude that in both cases~$N(T_k)= T_k$.

    For Property (2), in the flip from $T_i$ to $T_{i+1}$ we denote the removed edge by $h_i$ and the added edge by $g_{i+1}$.
    Consider the cycle $c_i$ containing $e$ in $T_i\cup \{e\}$.
    If $h_i$ does not belong to~$c_i$, then $c_{i+1} = c_{i}$ and $f_{i+1} = f_{i}$.
    The normalized trees $N(T_i)$ and $N(T_{i+1})$ also differ in the flip $h_i\to g_{i+1}$.
    This case is depicted via the commutative diagram in \cref{fig:same}.
    If $h_i$ is part of $c_i$ then in the normalization we have $h_i=f_i$.
    Moreover, $c_i \neq c_{i+1}$ as we remove~$h_i$ from $T_i$ and add $g_{i+1}$ to obtain $T_{i+1}$.
    When removing $f_{i+1}$ from $N(T_i)$ and adding~$g_{i+1}$ we obtain $N(T_{i+1})$.
    In other words, the flip $h_i\to g_{i+1}$ from $T_i$ to $T_{i+1}$ is transformed to the flip $f_{i+1} \to g_{i+1}$ from $N(T_i)$ to $N(T_{i+1})$, see~\cref{fig:different}.
    Thus we conclude that Property~(2)~holds.

    For Property (3), observe that as long as in step $k$ there is an edge $f_k \neq e$ in the cycle $c_k$ that is later on flipped, $e$ is part of the tree $N(T_k)$.
    Consider the last flip in $\flipSeq_N$ for which an edge is added to $c_k$ for some $k$.
    Then in this flip $e$ was removed from $N(T_k)$ and flipped to an edge in the cycle $c_k$.
    As now the cycle $c_k$ consists entirely of edges which are not flipped anymore in the remainder of $\flipSeq$, Property~(3) follows.
    
    With the new flip sequence $N(T_0)$, $N(T_1),\ldots,N(T_{k-1})$, $N(T_k)$ we have made sure that~$e$ is no longer an edge that is flipped multiple times, but not crossed at any point of the flip sequence.
    However, we might have introduced another edge with these properties in the new flip sequence.
    If no such edge exists then we are done.
    Thus for the remainder of the proof we assume that yet another edge is flipped multiple times but not crossed by any other edge of the flip sequence $N(T_0)$, $N(T_1),\ldots,N(T_{k-1})$, $N(T_k)$.
    In order to be able to show that~\cref{prop:final-destination} holds for all edges at the same time we, resort to a technique used in~\cite[Lemma~20]{AichholzerDV25}.
    
    If an edge $s=ab$ is contained in all trees of a flip sequence, then $s$ splits the convex point set into two sides $A$ and $B$, the vertices that appear in clockwise order between $a$ and $b$ and the vertices that appear in counterclockwise order.
    The vertices $a$ and $b$ count towards both, side $A$ and side $B$.
    For every tree, we obtain two subtrees, the induced subtrees on $A$ and~$B$.
    Every flip in one of the trees is completely independent of the flips in the other side.
    Thus, we can interleave both ordered subsequences that flip edges in only one of the two sides to obtain a valid flip sequence with the same initial and final trees. 
    We will use this to argue that the only flip in $\trace(e,\flipSeq_N)$, denoted by $e\to g$, is independent of all subsequent flips in~$\flipSeq_N$.
    This will then imply that $e\to g$ can be performed as the last flip.

    To this end, assume that the flip from $N(T_{i})$ to $N(T_{i+1})$ is the final flip that removes~$e$ and inserts $g$.
    We split the convex point set along all edges of $c_{i+1}\setminus\{e,g\}$. Every tree $N(T_i), N(T_{i+1}), \ldots, N(T_k)$ now splits into $\lvert c_{i+1} \rvert -1$ trees by splitting along the edges of $c_{i+1}\setminus\{e,g\}$ (recall that $c_j=c_{i+1}$ for any $j \geq i+1$). Flip sequences of these trees in the resulting subsets are independent of another and can be interleaved in any order.
 
    After splitting the point set, there exists a particular subset $C$, as depicted in \cref{fig:baum}, of vertices that contains the vertices of both, $e$ and $g$.
    By the previous discussion, we can perform the flip sequence in $C$, that also performs the flip $e\to g$, as the last flip sequence among all flip sequences in the subsets.

    It remains to argue that we can perform the flip $e\to g$ as the last flip in~$C$. For this, we argue that all other flips can be performed, independently of whether $e$ or $g$ is present in the current tree. $C$ has three parts for which $e$ and $g$ form the border: The side $C_e$ of~$e$ that does not contain $g$, the side $C_g$ of $g$ that does not contain $e$, and the part between $e$ and~$g$. The latter part contains exactly the edges of $c_{i+1}$. Therefore, no flips other than $e\to g$ happen in this part, since no edge of $c_{i+1}$ is flipped.

     \begin{figure}[htb]%
        \captionsetup[subfigure]{justification=centering}%
        \begin{subfigure}[t]{\columnwidth/3}%
            \centering%
            \includegraphics[page=1]{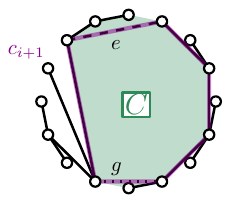}%
            \subcaption{%
            Illustration of the cycle $c_{i+1}$.
            }
            \label{fig:baum}%
        \end{subfigure}%
        \begin{subfigure}[t]{0.65\columnwidth}%
            \centering%
            \includegraphics[page=2]{tree_cycles}%
            \subcaption{Changing $c_{h',g'}$ and $\hat{c}_{h',g'}$ into one another}
            \label{fig:baum2}
        \end{subfigure}%
        \caption{Illustration of the concepts of the proof of \cref{prop:final-destination}.}
        \label{fig:baum3}
    \end{figure}

    Consider a flip in $C_e$. If we want to flip $h'\to g'$ in $C_e$ then, when adding $g'$ to the involved tree in $C_e$, we obtain a cycle $c_{h',g'}$ that contains $h'$ and $g'$. 
    If $c_{h',g'}$ contains neither $e$ nor~$g$, then obviously it does not matter if we perform the flip $h'\to g'$ before of after $e\to g$. 
    If one of $e$ or $g$ is in $c_{h',g'}$ (note that it is not possible that both are in the cycle at the same time), then we argue that they can be replaced by each other. The alternative cycle containing $g'$ and $h'$ is then the symmetric difference of $c_{h',g'}$ and $c_{i+1}$, namely $\hat{c}_{h',g'} = c_{h',g'} \Delta c_{i+1}$. See \cref{fig:baum2} for an example. If $c_{h',g'}$ contains the edge $e$, then $\hat{c}_{h',g'}$ contains the detour via $c_{i+1}\setminus\{e\}$, including $g$. Vice, versa if $c_{h',g'}$ contains $g$, and thus $c_{i+1}\setminus\{e\}$, then $\hat{c}_{h',g'}$ contains $e$.
    For flips in $C_g$ we argue symmetrically.

    To finally prove the theorem, we conclude that we can reorder the flips in the flip sequence such that the flip $e\to g$, which is a final flip, is the last flip of the flip sequence. As $e$ was an edge of the initial tree $T_I$ and has now trace length 1, we can remove the flip $e\to g$ from our sequence without changing whether the conditions of \cref{prop:final-destination} are fulfilled. Therefore, by omitting the flip $e\to g$ and adjusting the target tree accordingly to $T_k + e - g$, we have produced a flip sequence that is one flip shorter than our initial sequence. Recall that we assumed that an edge that is flipped multiple times but not crossed by any other edge was introduced to the sequence $N(T_0)$, $N(T_1),\ldots,N(T_{k-1})$, $N(T_k)$. Thus our new, shorter, flip sequence also contains that edge. This provides a contradiction to our assumption that the flip sequence from the start was the one with the minimum number of flips among all sequences that do not fulfill the statement of the theorem.
\end{proof}

We continue by showing that~\cref{prop:final-destination} is strong enough to settle the reparking property for compatible flip sequences, as well as for convex hull edges of $T_I$ in the general~case.

\reparkingch*

\begin{proof}
    The first part follows from the fact that convex hull edges are never crossed together with~\cref{prop:final-destination}. 

    For the second part we apply that the parking property holds for compatible flips~\cite[Lemma 20]{AichholzerDV25}. Since there exists a shortest flip sequence that only introduces parking edges on the convex hull, we can apply part~(1) of this theorem to the parking edges on the convex hull. Every edge is then flipped at most twice: once to park on the convex hull and once for the final flip.
\end{proof}

\begin{corollary}
    For every pair of trees $T_I$, $T_F$, there exists a shortest flip sequence from $T_I$ to $T_F$ in which every flip that removes a diagonal~$e$ adds one of the three following:
    \begin{itemize}
        \item An edge of $T_F$ via a final flip.
        \item An edge on the convex hull.
        \item A parking edge that crosses $e$.
    \end{itemize}
    Further, any flip that removes a convex hull edge is final.
\end{corollary}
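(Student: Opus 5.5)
The plan is to start from a shortest flip sequence $\flipSeq$ given by \cref{prop:final-destination}, where every flip is final or its removed edge is later crossed by another edge of the sequence. We then post-process $\flipSeq$ with \cref{lem:prehappy3} so that no new violations appear. Both claims follow from one observation: a convex hull edge is crossed by no segment on $P$. So if a flip removes a convex hull edge, the removed edge is never crossed later, and the flip must be final. This gives the last sentence of the corollary directly; it is also the content of \cref{thm:reparkingch}(1).

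For diagonals, consider a flip $e\to f$ that removes a diagonal $e$ and is not final, so $f\notin T_F$ is a parking edge. If $f$ is a convex hull edge, we are in the second case. So assume $f$ is a diagonal. If $f$ crosses $e$, we are in the third case. The remaining situation is a non-final flip $e\to f$ with $f$ a diagonal and $e,f$ not crossing, i.e.\ a compatible flip to a diagonal parking edge. Let $f\to e'$ be the next flip in the trace.

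To remove such flips, we apply \cref{lem:prehappy3} to the trace segment $(e,f,e')$. Formally, we consider the subsequence starting at the tree where $e$ is present, so that $e$ plays the role of the initial edge. Since $e\to f$ is compatible, the lemma replaces $f$ by a convex hull edge $h$ without changing the length or any other trace. The new flip $e\to h$ now falls into the second case. The subsequent flip $h\to e'$ removes a convex hull edge, so it must be made final; we handle this by applying \cref{prop:final-destination}'s argument again, or by noting that in a shortest sequence the remaining trace can be chosen so that $e'$ is final.

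The main obstacle is ensuring that this replacement process terminates and does not destroy the property guaranteed by \cref{prop:final-destination} for other flips. The lemma only alters the trace of $e$, but it changes which segments appear in the sequence, so it could affect which removed edges are crossed later. To control this, we choose, among all shortest sequences satisfying \cref{prop:final-destination}, one that lexicographically minimizes the number of compatible flips to diagonal parking edges and then maximizes the number of parking edges on the hull. Each application of \cref{lem:prehappy3} strictly decreases the first quantity, since the flip $e\to f$ disappears and $h\to e'$ removes a hull edge rather than a diagonal. We then recheck that the crossing condition of \cref{prop:final-destination} for other removed edges is preserved, or that it can be restored by re-running its normalization argument, which does not increase the minimized quantity.
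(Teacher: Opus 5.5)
Your ingredients are the same as the paper's. Its proof is just the case split on what a removed diagonal is flipped to, with \cref{lem:prehappy3} handling a non-final compatible flip to a diagonal, while the claim about hull edges is taken implicitly from \cref{thm:reparkingch}(1). The paper never argues that both properties hold in the \emph{same} sequence. You rightly identify this as the real issue, but your proposal does not resolve it.

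After \cref{lem:prehappy3} turns $(e,f,e')$ into $(e,h,e')$, every other flip keeps its removed edge, its inserted edge and its finality. These are determined by the traces, so the ``crossed later'' bookkeeping for other edges is beside the point. The one new defect is $h\to e'$, which removes a hull edge and is non-final whenever the trace continues past $e'$. Your remedies for this are only asserted:
\begin{itemize}
\item That ``the remaining trace can be chosen so that $e'$ is final'' is precisely what has to be proved.
\item The claim that re-running the normalization does not increase the number of compatible flips to diagonal parking edges is unfounded. In the normalization with respect to an uncrossed edge $x$, every modified flip has the form $f_{i+1}\to g_{i+1}$ with both edges on the cycle $c_{i+1}$ of the plane graph $T_{i+1}\cup\{x\}$. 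Such a flip is compatible, so nothing prevents a crossing flip into a non-final diagonal from becoming a compatible flip to a diagonal parking edge.
\end{itemize}
Hence your extremal choice yields no contradiction: the lemma takes you out of the class you minimize over, and you have no controlled way back into it.

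A quantity that behaves better is the number of flips that insert a diagonal. \cref{lem:prehappy3} lowers it by one: the flip $e\to f$ becomes $e\to h$, and $h\to e'$ still inserts $e'$. In a shortest (sub)sequence, the normalization inserts the same edge $g_{i+1}$ at every step; it only changes which edge of $c_{i+1}$ is removed, and consecutive normalized trees cannot coincide. Reordering independent flips does not change inserted edges either. So every shortest sequence minimizing this number already satisfies the part of the statement about diagonals. What remains is to show that some such minimizer also removes hull edges only by final flips, for instance by checking that the argument for \cref{prop:final-destination} can be carried out with modifications that never increase this number. Without such a step, your proposal, like the paper's two-line proof, establishes each part of the corollary for some shortest sequence, but not both for the same one.
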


\begin{proof}
    The only case left is when a diagonal $e$ is replaced by a non-crossing diagonal. 
    By \cref{lem:prehappy3}, there exists an alternative flip sequence that replaces $e$ with a convex hull~edge.
\end{proof}

\section{Counterexamples for general flips}
\label{sec:counterexamples}

In the previous section, we identified structural conditions that any counterexample to the parking and reparking conjectures must satisfy for general flips.
With these conditions in hand, we now construct specific pairs of trees, $T_I$ and $T_F$, that not only meet these requirements but also go further, yielding shortest flip sequences that violate the conjectures.

Through a combination of case distinctions, additional observations on flip sequences, and support from a computer program, we verify that the proposed flip sequences are indeed the shortest possible flip sequences.
Moreover, we demonstrate that no alternative shortest flip sequence can satisfy the conjectures.

\subsection{The parking edge property}
\label{subsec:parking-edges}

In this section, we prove \cref{thm:noparking}, thereby disproving \cref{conj:parking}. The theorem consists of two parts, which we state and prove separately for convenience.

\begin{figure}[htb]%
    \centering%
    \includegraphics[page=1]{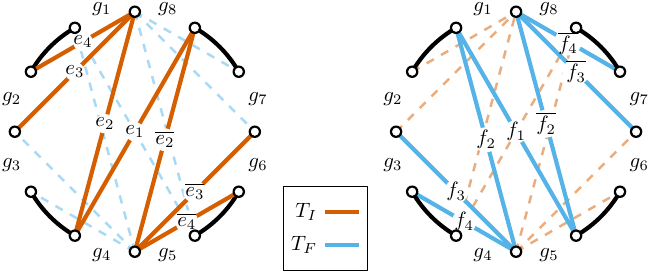}%
    \caption{A counterexample to the parking edge conjecture using $n=12$ points.}
    \label{fig:parking-counterexample}
\end{figure}

\begin{restatable}[$\star$]{lemma}{lemmaParkingEdgeProperty}
    \label{thm:fallacy}
    The parking edge conjecture does not hold for general flips and $n\geq 12$ points in convex position.
    Moreover, a minimum counterexample need not be restricted to incompatible flips, and may even contain edge~slides in all shortest flip sequences.
\end{restatable}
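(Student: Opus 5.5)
The plan is to exhibit the explicit $12$-point instance $T_I,T_F$ of \cref{fig:parking-counterexample}. We then certify two facts: its flip distance $d(T_I,T_F)=d$ is attained by a sequence using a diagonal parking edge, and every flip sequence using only edges of $T_I\cup T_F\cup\CH$ needs at least $d+1$ flips. Once a single instance on $12$ points exists, the claim for all $n\ge 12$ follows by padding. Add extra points on one convex hull edge of both trees, attaching each as a happy leaf via a convex hull edge. By \cref{thm:reparkingch}(1) and the independence argument along an uncrossed edge (as in the proof of \cref{prop:final-destination}), these happy hull edges can be assumed untouched in shortest sequences. Hence flip distances and the parking-restricted distances are both unchanged.

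For the upper bound, I would write down an explicit sequence of length $d$. It should contain one non-final flip into a diagonal $f\notin T_F$ that crosses the removed edge, as \cref{lem:prehappy3} forces, since a compatible insertion or removal could be rerouted to the hull. Where possible the sequence should also consist of compatible flips, and in fact edge slides, everywhere except at this crossing parking flip. This addresses the second sentence of the statement: we must show that every shortest sequence contains slides and compatible flips. I would get this from the same exhaustive certification described below, by checking that all shortest sequences contain such flips.

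For the lower bounds, first count: every unhappy edge of $T_I$ needs at least one flip, which gives $|T_I\setminus T_F|$. By \cref{prop:final-destination} we may restrict to sequences where every non-final flip removes an edge that is later crossed. By \cref{thm:reparkingch}(1), hull parking edges are flipped at most once more. A parking-restricted sequence of length $d$ would therefore need an exact number of parkings on the hull, each forced at a specific moment. I would do a case analysis on which unhappy edges can be flipped directly into $T_F$ first; these are the edges whose final partner closes a plane cycle through them. The aim is to show that hull parking at that stage blocks a needed target edge, costing an extra flip.

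The main obstacle is proving optimality and the lower bound rigorously, since hand case analysis on $12$ points is error-prone. I would rely on a computer search: run a BFS in the flip graph of plane spanning trees on $12$ convex points from $T_I$ to get $d$ exactly. Then run a second BFS restricted to trees contained in $T_I\cup T_F\cup\CH$ to get the parking distance, and confirm it exceeds $d$. The search would also enumerate all shortest sequences to check the slide claim. The human-readable part of the proof would consist of the structural reductions above, which shrink the search.
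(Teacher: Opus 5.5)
Your treatment of the $12$-point instance of \cref{fig:parking-counterexample} follows the paper's approach. You exhibit the length-$8$ sequence with a single crossing diagonal parking flip $e_1\to p_1$, which contains two slides. You then run an exhaustive, computer-assisted check that hull-only parking cannot reach length $8$, and enumerate all shortest sequences for the slide claim; the paper finds $20$, all interleavings of the same flips.

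The paper avoids your full BFS (about $8.4$ million trees) using one observation you do not use: every edge of $T_I\Delta T_F$ is crossed by at least two edges of the other tree, so the first flip can never be final. This gives $d\ge |T_I\setminus T_F|+1=8$ by hand. It also forces every length-$8$ sequence to consist of exactly seven final flips, one per edge of $T_F\setminus T_I$, and one non-final first flip. The hull-restricted case therefore reduces to trying each hull parking as first flip and then following only perfect flips, a graph of about a hundred trees.

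The genuine gap is in your padding to $n>12$, which the paper leaves implicit.
\begin{itemize}
\item \cref{thm:reparkingch}(1) only says that in some shortest sequence a hull edge is flipped at most once. It does not say a happy hull edge is never flipped, and it concerns shortest sequences, not shortest hull-restricted ones.
\item More seriously, inserting points $x_1,\dots,x_m$ between consecutive hull vertices $a,b$ turns the happy edge $ab$ into a diagonal. In unrestricted sequences, $ab$ can now be crossed by edges $x_iv$.
\item The insertion also creates a new hull edge $x_mb$ on which a restricted sequence could park; for example, $ab\to x_mb$ is a legal flip.
\end{itemize}
Keeping the new leaf edges fixed excludes neither of these new options, so ``hence both distances are unchanged'' does not follow from what you state.

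The crossing observation repairs this. Padding leaves the symmetric difference and its crossings unchanged, so the padded distance is still $8$. In every length-$8$ sequence, the seven final flips are exactly those inserting the edges of $T_F\setminus T_I$. Hence no happy edge is ever removed, since reinserting it would need an extra final flip; in particular $ab$ and the chain $ax_1,\dots,x_{m-1}x_m$ stay fixed. Consequently $x_mb$ is never inserted, because it would close a cycle consisting of fixed happy edges. A hull-restricted length-$8$ sequence on the padded set would then restrict to one on the original $12$ points, which the $12$-point check rules out.
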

\begin{proof}[Proof sketch.]
    Consider the pair of trees visualized in~\cref{fig:parking-counterexample}.
    Our aim is to transform $T_I$ into $T_F$.
    In doing so, we demonstrate that all shortest flip sequences must necessarily park an edge on a diagonal.
    In~\cref{fig:parking-counterexample-opt}, we depict a  flip sequence that uses a diagonal parking edge~$p_1$ (green).
    Moreover, it requires exactly one parking flip, $e_1 \to p_1$, while all other flips are final.
    Furthermore, since every edge in the symmetric difference~$T_I\Delta T_F$ is crossed at least twice, the first flip of any sequence from~$T_I$ to~$T_F$ must necessarily be a parking flip. 
    Therefore, the sequence illustrated in \cref{fig:parking-counterexample-opt} is indeed a shortest flip sequence.
    We note that this flip sequence does also contain two edge slides; these are highlighted in purple.

    It remains to show that any parking flip that parks to the convex hull instead of~$p_1$ requires at least one additional flip.
    To this end, we show that no flip sequence beginning with a parking flip to the convex hull can reach~$T_F$ exclusively using final flips thereafter.
    In~\crefrange{fig:parking-counterexample-park-1}{fig:parking-counterexample-park-4}, we indicate all possible first flips (up to symmetry) that replace a diagonal of~$T_I\setminus T_F$ with an edge on the convex hull.
    For each such flip, \cref{fig:cases} depicts a flip sequence that continues performing perfect flips until it becomes impossible to add any further edge of the final tree without making another parking flip.
    We emphasize that these sequences are not unique, as sometimes multiple edges can be flipped to the same edge, or flips can be performed in a different order yet still produce the same result.
    For instance, \cref{fig:example-parking-to-gap-8} shows the graph of all possible sequences that start with~${e_2\to g_8}$ and otherwise consist of final flips.
    The same reasoning applies to the other cases in which a diagonal is parked on the convex hull, as each can be analyzed in a similar manner.
    \begin{figure}[ph]
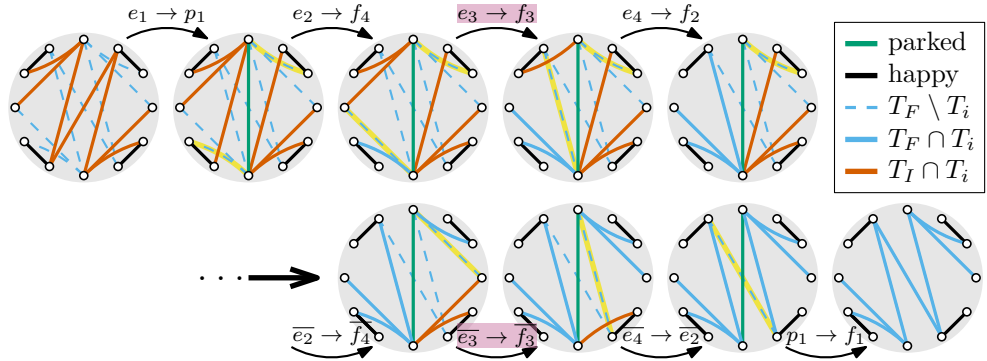
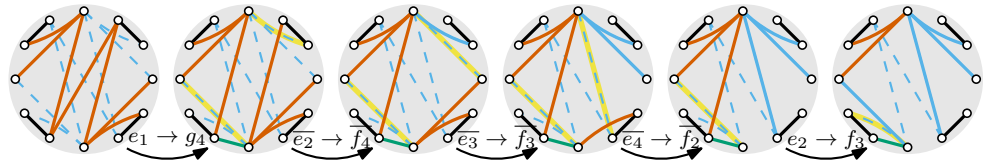
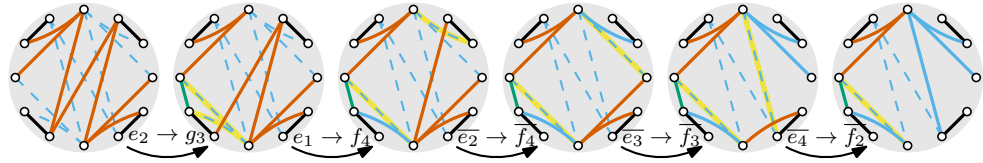
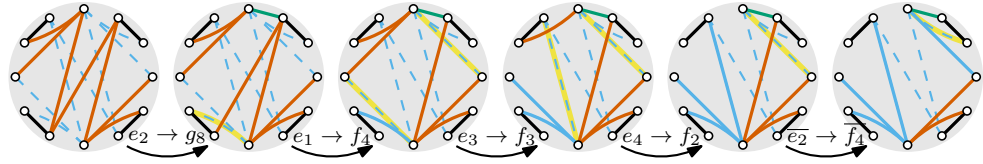
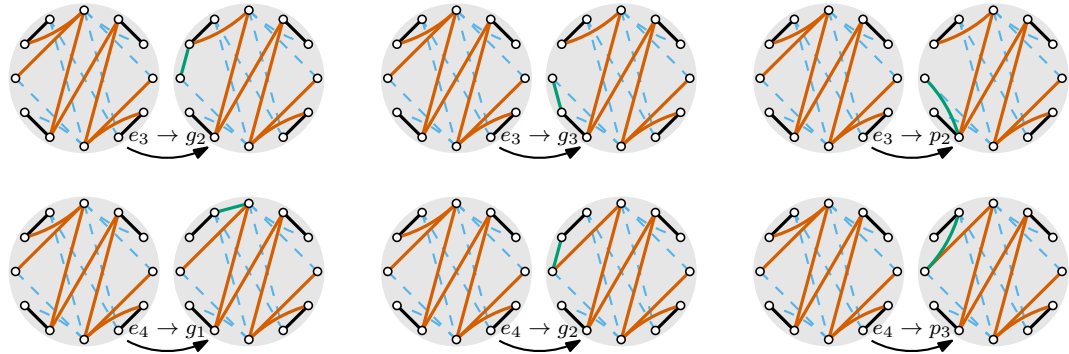

        \begin{subfigure}[t]{\columnwidth}%
            \centering%
            \includegraphics[page=2]{figures/parking-counterexample}
            \subcaption{A shortest flip sequence for the instance shown in~\cref{fig:parking-counterexample}. The two highlighted flips are slides.}%
            \label{fig:parking-counterexample-opt}%
        \end{subfigure}%
        \par\medskip%
        \begin{subfigure}[t]{\columnwidth}%
            \centering%
            \includegraphics[page=3]{figures/parking-counterexample}
            \subcaption{A ``maximal'' flip sequence starting with~$e_1\to g_4$. Note that~$e_1\to g_8$ is symmetric.}%
            \label{fig:parking-counterexample-park-1}%
        \end{subfigure}%
        \par\medskip%
        \begin{subfigure}[t]{\columnwidth}%
            \centering%
            \includegraphics[page=4]{figures/parking-counterexample}
            \subcaption{A ``maximal'' flip sequence starting with~$e_2\to g_3$. Note that~$\overline{e_2}\to g_7$ is symmetric.}%
            \label{fig:parking-counterexample-park-2}%
        \end{subfigure}%
        \par\medskip%
        \begin{subfigure}[t]{\columnwidth}%
            \centering%
            \includegraphics[page=5]{figures/parking-counterexample}
            \subcaption{A ``maximal'' flip sequence starting with~$e_2\to g_8$. Note that~$\overline{e_2}\to g_4$ is symmetric.}%
            \label{fig:parking-counterexample-park-2b}%
        \end{subfigure}%
        \par\medskip%
        \begin{subfigure}[t]{\columnwidth}%
            \includegraphics[page=6]{figures/parking-counterexample}%
            \hfill%
            \includegraphics[page=7]{figures/parking-counterexample}%
            \hfill%
            \includegraphics[page=8]{figures/parking-counterexample}%
            \par\bigskip%
            \includegraphics[page=9]{figures/parking-counterexample}%
            \hfill%
            \includegraphics[page=10]{figures/parking-counterexample}%
            \hfill%
            \includegraphics[page=11]{figures/parking-counterexample}%
            \subcaption{Parking either $e_3$ or $e_4$ leads to dead ends after a single flip. Note that $\overline{e_3}$ and $\overline{e_4}$ have symmetric flips.} %
            \label{fig:parking-counterexample-park-4}%
        \end{subfigure}%
        \caption{We illustrate a shortest flip sequence (a) that parks the edge~$1$ diagonally, alongside the maximal number of perfect flips possible if another edge pair is chosen for parking in the first flip.
        For clarity edges of~$T_F$ that are can currently be inserted are emphasized.}
        \label{fig:cases}
    \end{figure}
\end{proof}

\begin{figure}[phtb]%
    \centering%
    \includegraphics{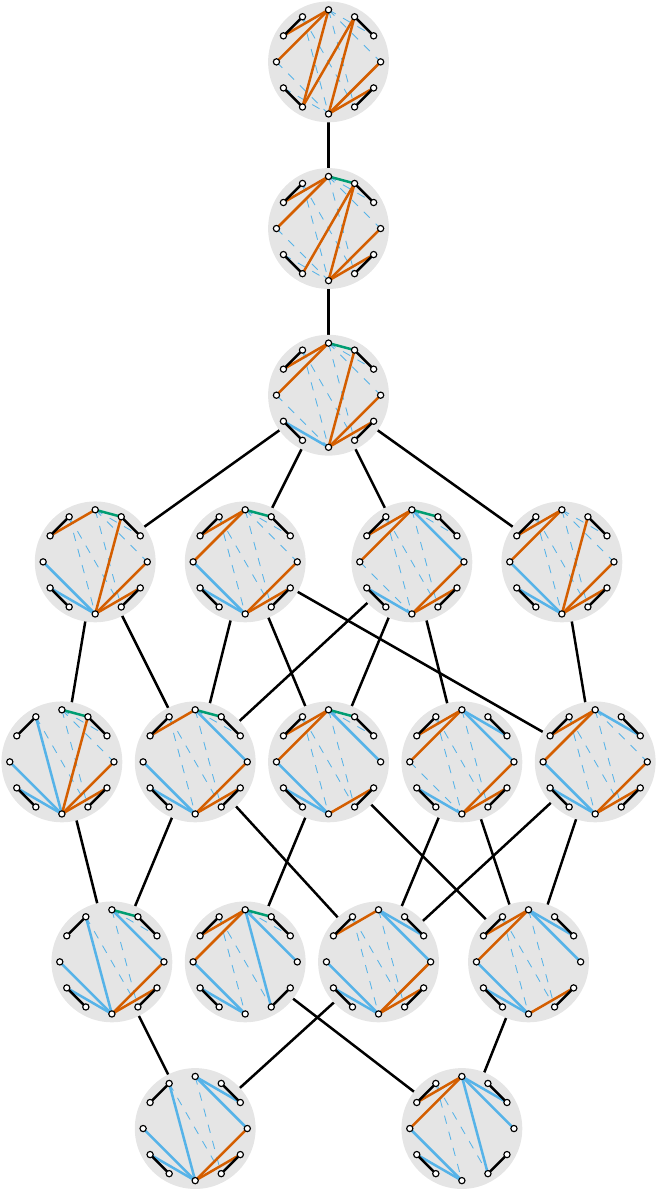}%
    \caption{The subgraph of reachable trees after~$e_2\to g_8$ without performing a second parking flip.}
    \label{fig:example-parking-to-gap-8}
\end{figure}

To compute all relevant flip sequences for the example shown in~\cref{fig:parking-counterexample}, we employ a straightforward computer program~\cite{visual-and-verifier}, which systematically generates and tracks all possible sequences.
The program initializes a set of edges with the edges of the initial tree~$T_I$ and then simply works in a depth first search (DFS) like manner. For every given set of edges it considers all possible legal flips (such that after the flip the set of edges still forms a spanning tree). For each such flip, it performs the flip and then increases a counter in case that this flip was not perfect, that is, did not insert an edge of the final tree. Only if this counter does not exceed the maximum allowed number of non-perfect flips (one in our case) the program continues recursively. Otherwise the flip is undone and the next legal flip is performed, etc.
Whenever all edges of the set are from the final tree~$T_F$ another counter for the number of shortest flip sequences is increased. The program terminates once the DFS is completed.

For the trees shown in~\cref{fig:parking-counterexample} the output of the program gives~$20$ different flip sequences of length 8. That these are indeed precisely all the sequences which contain the parking flip~${e_1\to p_1}$ can be verified by counting all shortest flip sequences depicted in \cref{fig:counterexample-dag}.
Observe that there is some left-right symmetry once the flip~${e_1\to p_1}$ has been made. Then the flips~${e_2\to f_4}$, ${e_3\to f_3}$, and~${e_4\to f_2}$ have to be performed in that order, but can interleave with the symmetric flips
${\overline{e_2}\to \overline{f_4}}$, ${\overline{e_3}\to \overline{f_3}}$, and~${\overline{e_4}\to \overline{f_2}}$ in an arbitrary order. As this is the only place where the order is flexible, we get~${6 \choose 3}=20$ different flip sequences of length 8, precisely the number reported by the program.

We used a modified version of our program to generate the relevant flip graph, that is, all trees and flips between them that can be reached from the initial tree~$T_I$ with a flip sequence that includes at most one non-perfect flip.
This graph has 118 different trees as nodes, and 214 edges (flips) between them.
The full graph is depicted in~\cref{fig:counterexample-dag} in Appendix~\ref{app:figures}.

We~further reduced these numbers by forcing the first (non-perfect) flip to be a flip onto the convex hull.
This is possible because to verify the correctness of our statement, we only need to consider shortest flip sequences that do not have a parking flip to a diagonal (recall that the first flip has to be a parking flip).
This results in a graph with only 67 trees and 120 edges which nicely shows all relevant cases, including the one elaborated in detail in the proof of~\cref{thm:fallacy}.
Both variants of the flip graph can be interactively explored~\cite{visual-and-verifier}.

For the reader's convenience, all relevant components of the graph can be found in Appendix~\ref{app:figures} alongside~\cref{fig:counterexample-dag}; each possible parking flip leads to a unique (up to symmetry) portion of the subgraph that remains reachable without making a second non-perfect flip.
These can be found in~\cref{fig:example-parking-to-gap-8,fig:descends_156_7,fig:descends_383,fig:descends_80}.

\smallskip
Restricting flips to parking edges on the convex hull incurs a single additional flip, as we have seen in~\cref{fig:parking-counterexample}.
We next strengthen this result, proving the existence of instances that require~$\Omega(n)$ additional flips under the same constraint.

\begin{lemma}
    For any $k\in \mathbb{Z}^+$, there exist two non-crossing spanning trees on $n\geq10k+2$ points such that a shortest flip sequence takes $8k$ flips, whereas a shortest flip sequence that parks only on convex hull edges takes at least $9k$ flips.
\end{lemma}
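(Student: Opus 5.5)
We glue $k$ copies of the $12$-point instance from \cref{thm:fallacy} along shared convex hull vertices and use happy edges to make the copies behave independently. In that instance the shortest flip sequence has length $8$. It contains exactly one parking flip, $e_1\to p_1$, whose parking edge is a diagonal, while every flip sequence that parks only on the convex hull needs at least $9$ flips. For the $k$-fold instance we place $k$ blocks of $12$ consecutive points around a convex polygon. Consecutive blocks share two boundary points, or are joined through a fixed happy hull edge, so that the total point count is $10k+2$. In each block we put a copy of $T_I$ and $T_F$. The copies are connected by edges common to both trees: hull edges, or chords separating the blocks. For $n>10k+2$ we append extra points connected by happy hull edges.

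The upper bound is immediate. We run the length-$8$ sequence of \cref{fig:parking-counterexample-opt} in each block. The blocks are separated by the shared edges, so these sequences do not interfere, and the total is $8k$ flips.

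For the lower bound, the key step is to show that any flip sequence using only hull parking edges can be charged blockwise. Each block must then receive at least $9$ flips, giving at least $9k$. The separating edges are happy. Hull happy edges are never crossed, so by \cref{thm:reparkingch}(1) we may assume they are never flipped. If we keep them, the splitting argument from the proof of \cref{prop:final-destination} applies: the flips decompose into independent subsequences, one per side, and a sequence restricted to one block is a valid hull-parking sequence for the $12$-point instance. That instance needs $9$ flips, and the bound follows.

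The main obstacle is that a hypothetical shorter sequence might flip the separating happy edges, for example a separating diagonal, to let edges move between blocks. Here the happy edge conjecture cannot be used, since it is not known to hold. To avoid this I would make every separator a convex hull edge of the whole point set, so that it is never crossed. In the ring arrangement this holds if consecutive blocks share a single point rather than an edge, with each block's trees attached at the shared vertex. A sequence that parks only on hull edges then never removes a separator, and this property can be imposed without loss of generality, via \cref{prop:final-destination} or a direct normalization that inserts the separator back. Once the separators are fixed, the flip sequence restricted to each block is a hull-parking sequence for the base instance, and the $9$-per-block bound follows from \cref{thm:fallacy}.

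I expect the hardest step to be justifying this without loss of generality claim for sequences restricted to hull parking. I would first try to re-run the normalization from \cref{prop:final-destination}. It only replaces flips by flips with the same added edges or adds hull edges, so it should preserve the hull-parking restriction while not increasing the length.
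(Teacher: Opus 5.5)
There is a genuine gap, and it sits in the step you flag as the main obstacle. Your proposed remedy does not work.

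In the $10k+2$-point gluing the separators are diagonals of the combined point set. So ``hull happy edges are never crossed'' does not apply to them. Also, \cref{thm:reparkingch}(1) concerns only shortest unrestricted sequences, and it only says hull edges are flipped at most once, not that happy hull edges stay fixed.

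Turning the separators into hull edges defeats their purpose:
\begin{itemize}
\item A hull edge does not split the point set into two nontrivial sides, so the splitting argument is lost.
\item If blocks share a single point, there is no separator edge at all.
\item A ring of such blocks makes the union of the block trees contain a cycle, so it is not a spanning tree.
\item In a chain arrangement, whether blocks share single points or are joined by hull connectors, the closing hull edge between the first and last block is a legal parking edge. It couples these two blocks and breaks the blockwise charging.
\end{itemize}

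What you are missing is that no redesign is needed. In a sequence that parks only on hull edges, every edge that ever appears lies in $T_I\cup T_F\cup\CH$. A separating happy diagonal belongs to both plane trees, so no edge of $T_I\cup T_F$ crosses it, and hull edges cross nothing. Hence the separators are never crossed in any hull-parking sequence.

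This is exactly the hypothesis of the normalization you mention at the end (Proposition~18 of \cite{AichholzerBBDDKLLTU24}, cf.\ the proof of \cref{prop:final-destination}). Suppose a shortest hull-parking sequence removed a separator. Normalizing with respect to that separator reinserts it and otherwise adds only edges the original sequence already added, so the sequence stays hull-parking. It also makes the step that re-adds the separator trivial, giving a strictly shorter hull-parking sequence.

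Iterating this, a shortest hull-parking sequence keeps all $k-1$ diagonal separators. It then splits along them into $k$ hull-parking sequences for the $12$-point instance, and \cref{thm:fallacy} forces at least $9$ flips per block. This is the paper's argument. You were right that the normalization preserves the hull-parking restriction. You just needed to apply it to the original diagonal separators, whose non-crossing follows from the restriction itself.
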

\begin{figure}[htb]%
    \centering%
    \includegraphics[page=12]{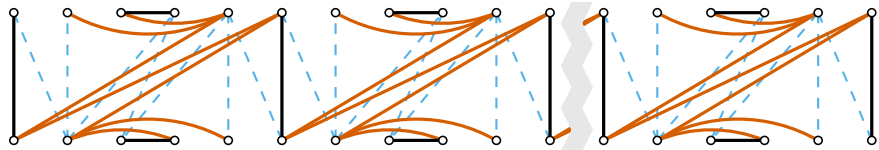}%
    \caption{A counterexample to the parking edge conjecture using $n=10k+2$ points.}
    \label{fig:parking-counterexample-linear-gap}
\end{figure}

\begin{proof}
    We glue $k$ instances of trees $T_I$ as seen in~\cref{fig:parking-counterexample} along happy edges. Our target structure consists of $k$ instances of $T_F$ glued along the same happy edges. If no happy edges are flipped in a flip sequence this is equivalent to having $k$ separate flip sequences in every individual occurrence of $T_I$ and we need $8$ flips in an optimal flip sequence compared to $9$ if we only use parking edges on the convex hull. Summing over all occurrences of $T_I$ gives an optimal flip sequence of length $8k$ and a minimum flip distance when only using parking edges on the convex hull of $9k$.

    Next, assume for a minimum counterexample that all $k-1$ consecutive happy edges are removed at some point during the flip sequence and added back in a later flip. We argue that this flip sequence has to take at least $10k-2$ flips, unless it introduces parking edges that are not on the convex hull. Assume the opposite. Now, the flip sequence removes and adds happy edges and since the flip sequence does not introduce diagonal parking edges, the happy edges are never crossed by any edge during the flip sequence. Proposition 18 in~\cite{AichholzerBBDDKLLTU24} states that in any minimum flip sequence, for any happy edge that is removed there later has to exist some edge that would have crossed this happy edge. If such a crossing edge does not exist, the proof of the proposition gives a construction of a flip sequence that is at least one flip shorter and that does not remove the happy edge. By applying this result, we can iteratively construct flip sequences that do not remove and add one happy edge such that the resulting flip sequence is by one flip shorter than the original flip sequence. Applying the result iteratively for all happy edges, gives a flip sequence of length at most $9k-1$ that does not flip any happy edges and still does not use any parking edges outside the convex hull. By the pigeonhole principle, there exists one occurrence of $T_I$ that is flipped to an occurrence of~$T_F$ in only $8$ flips without introducing diagonal parking edges, contradicting~\cref{thm:fallacy}.
\end{proof}

On the positive side, we demonstrate that flip sequences that park solely on the convex hull remain close to an optimal flip sequence.
More precisely, we prove that a diagonal parking flip can be replaced by two convex hull parking flips. Applying the following proposition once for every diagonal parking flip yields a flip sequence without any parking flip, but at the cost of one additional flip per iteration.

\begin{proposition}
    \label{prop:twopark}
    A flip sequence that adds and removes a parking diagonal $d$ by non-compatible flips can be replaced with a flip sequence that does not perform a diagonal parking flip that involves $d$ and that is only one flip longer. In particular, the resulting flip sequence performs strictly fewer diagonal parking flips than the original one.
\end{proposition}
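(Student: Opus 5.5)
The plan is to keep the original flip sequence $\flipSeq=T_0,\ldots,T_k$ almost intact and to replace the parking diagonal $d$ by convex hull edges that stand in for it. Suppose $d$ is inserted by the crossing flip $e\to d$ from $T_i$ to $T_{i+1}$ and removed by the crossing flip $d\to e'$ from $T_m$ to $T_{m+1}$. For every $j$ with $i<j\le m$, the forest $T_j-d$ has exactly two components, and their vertex sets $A_j$ and $B_j$ partition $P$. Since $P$ is in convex position and both parts are non-empty, some convex hull edge has one endpoint in $A_j$ and one in $B_j$. Any such hull edge $h$ gives a plane spanning tree $T_j-d+h$, because hull edges cross nothing. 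So it suffices to choose such hull edges $h_j$ consistently over time, and to change the choice at most once. Each change of the choice costs one extra flip; we will use exactly two hull parkings $h_1$ and $h_2$ plus one flip $h_1\to h_2$.

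Concretely, I would take $h_1$ to be a hull edge across the cut $(A_{i+1},B_{i+1})$ and replace $e\to d$ by $e\to h_1$. Symmetrically, I would take $h_2$ to be a hull edge across the cut $(A_m,B_m)$ and replace $d\to e'$ by $h_2\to e'$. The new sequence is $T'_j=T_j-d+h_j$ for $i<j\le m$, with $h_j=h_1$ before a switching index $s$ and $h_j=h_2$ from $s$ on; at index $s$ we insert the single extra flip $h_1\to h_2$. This gives exactly one additional flip. No diagonal parking involving $d$ remains, and all other traces are untouched except for the bookkeeping of the flips that change the cut, which I handle next.

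To justify the sequence, I would examine a single intermediate flip $g\to g'$ from $T_j$ to $T_{j+1}$ with $g\neq d$, using the same cycle-exchange argument as in the normalization in the proof of \cref{prop:final-destination}. If the fundamental cycle of $g'$ in $T_j$ avoids $d$, the cut is unchanged, and $g\to g'$ is also a valid flip from $T'_j$ to $T'_{j+1}$. If that cycle passes through $d$, then in $T'_j$ the corresponding cycle of $g'$ is the symmetric difference with the cycle of $d$ and passes through $h_j$ instead; as in \cref{fig:baum2}, $g$ still lies on it provided $h_j$ remains a valid cross edge of the new cut. So the only danger is that the cut changes so that $h_j$ no longer connects $A_{j+1}$ and $B_{j+1}$.

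The main obstacle is therefore to show that one switch suffices. That is, there is an index $s$ such that $h_1$ crosses every cut $(A_j,B_j)$ for $j<s$ and $h_2$ crosses every cut for $j\ge s$. My first attempt would use that both flips involving $d$ are crossing: the crossing of $d$ with $e$ and with $e'$ pins down, near the endpoints of $d$, which hull edges can separate the cut. I would then choose $h_1$ and $h_2$ among the hull edges adjacent to the endpoints of $d$, and argue via the convex ordering of $A_j$ and $B_j$ along the hull that these cuts change monotonically enough. If this fails, the fallback is to allow the switch to happen at the first index where $h_1$ becomes invalid, and to show that $h_2$ stays valid from then on.
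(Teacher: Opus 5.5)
Your local analysis is right: $T_j-d+h$ is a plane spanning tree exactly when $h$ is a hull edge joining the two components of $T_j-d$. An intermediate flip then carries over exactly when the stand-in is valid both before and after it.

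The gap is the step you call the main obstacle. It cannot be closed if you keep the original order of flips. The two components of $T_j-d$ form a non-crossing partition separating the endpoints of $d$. So on each of the two hull arcs between the endpoints of $d$ there is \emph{exactly one} hull edge crossing the cut; call them $a_j$ (arc $A$) and $b_j$ (arc $B$). Moreover, $a_j$ is determined only by the part of $T_j$ on side $A$ of $d$, and $b_j$ only by the part on side $B$.

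Suppose the original sequence moves $a$ at times $t_1<t_3$ and $b$ at times $t_2<t_4$, with $t_1<t_2<t_3<t_4$. This is easy to realise by shifting the switching point along an arc one vertex at a time, using flips whose fundamental cycle runs through $d$. Then any $h_1\in\{a_{i+1},b_{i+1}\}$ is invalid from $t_2$ on at the latest. Any $h_2\in\{a_m,b_m\}$ is valid only from $t_3$ on. So no switching index $s$ exists. Your heuristic does not help, since the crossings with $e$ and $e'$ only constrain the first and last cut. Switching at the first invalidation fails for the same reason.

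The missing idea is to \emph{reorder} the intermediate flips. Since $d$ is present in $T_{i+1},\dots,T_m$, every intermediate flip takes place entirely on one side of $d$, and flips on different sides are independent. The sequence is then:
\begin{enumerate}
\item Flip $e\to h_1$ with $h_1=a_{i+1}$.
\item Perform all side-$B$ flips. $h_1$ stays valid because side $A$ is untouched.
\item Flip $h_1\to h_2$ with $h_2=b_m$.
\item Perform all side-$A$ flips. $h_2$ stays valid because side $B$ is already in its final state.
\item Finish with $h_2\to e'$.
\end{enumerate}
This is the paper's argument. The trace of $e$ becomes $(e,h_1,h_2,e')$ instead of $(e,d,e')$, and all other traces are unchanged.
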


\begin{proof}
    Without loss of generality, assume that in the flip from $T_I = T_0$ to $T_1$ an edge $e$ is flipped to a parking edge $f=(v_a,v_b)$ via a non-compatible flip and in the flip from $T_{k-1}$ to $T_k = T_F$ $f$ is removed by a non-compatible flip and exchanged for an edge $e'$. 
    Then, $f$ separates the point set into two sides $A$ and $B$, the points that appear in clockwise order from $v_a$ to $v_b$, and the points in counterclockwise order. 
    We include $v_a$ and $v_b$ in both sets. 
    In the flip sequence from $T_1$ to $T_{k-1}$ flips that exchange edges in $A$ or $B$ can be executed independently from the other side. If we remove the edge $f$ from $T_1$, we obtain two subtrees and there exists a convex hull edge between two consecutive convex hull vertices $v_i,v_{i+1}$ in $A$ that connects the two subtrees. Without loss of generality $v_i$ lies in the same subtree as $v_a$.

    We flip from $T_0$ to $(T_1\setminus\{f\})\cup\{h\}$. Then any flip in the flip sequence from $T_1$ to $T_{k-1}$ that happens in side $B$ can still be executed. Any flip that added and removed an edge that bounded a cycle that contains $f$ now contains a path that walks first from $v_a$ to $v_i$ to the subtree in $A$, then traverses $h$ and later moves from $v_{i+1}$ to $v_b$.

    After all flips in $B$ have been executed and the side $B$  coincides with the side~$B$ in~$T_{k-1}$, we perform a flip that removes $h$ and adds a convex hull edge $h'$ that joins the two resulting subtrees in $B$. We continue to execute all flips in side~$A$. Again, any flip that involves a cycle that contains $f$ now contains a path through side $B$ and the convex hull edge $h'$. After performing all flips in $A$, we reached the tree~${T_{k-1}\setminus\{f\}\cup\{h\}}$. As a last flip, we remove $h$ and add $e'$ to obtain the tree $T_k$. We made all the flips that happened in the original flip sequence. Only the trace of edge $e$ has been changed from~$(e,f,e')$ to $(e,h,h',e')$. Therefore, the resulting flip sequence is exactly one flip longer than the original flip sequence.
\end{proof}

We note that the flip sequence derived due to the proof of~\cref{prop:twopark} is not ``nice'', in the sense that it includes a non-final flip from one convex hull edge to another. 
Using~\cref{prop:final-destination}, we can reorder certain flips, producing a ``nicer'' flip sequence.

\subsection{The reparking property}
\label{subsec:reparking-edges}

From~\cref{prop:final-destination,lem:prehappy3}, we derive necessary conditions for a counterexample to the reparking property as follows:

\begin{proposition}
    \label{prop:reparkingsetup}
    Suppose there exists a shortest flip sequence with an edge $e$ with $trace(e)=(e,e_1,e_2,e_3)$ of length $3$.
    If no other shortest flip sequence exists in which $e$ has shorter trace, the following conditions must hold:
    \begin{description}
        \item[\emph{(1)}] $e$ and $e_3$ are diagonals.
        \item[\emph{(2)}] $e$ and $e_1$ cross.
        \item[\emph{(3)}] $e_1$ and $e_2$ cross.
        \item[\emph{(4)}] $e_2$ and $e_3$ cross.
    \end{description}
\end{proposition}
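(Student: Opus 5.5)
We argue each condition by contraposition: if one of (1)--(4) fails, we transform the given shortest flip sequence into another shortest one (same length) in which $e$ has a shorter trace. The tools are \cref{prop:final-destination} (more precisely the normalization argument in its proof), \cref{thm:reparkingch}, and \cref{lem:prehappy3}. As in the proof of \cref{prop:final-destination}, we restrict attention to the subsequence from the tree containing $e$ to the tree in which $e_3$ is inserted. Trace manipulations on this window extend to the whole sequence, because the remaining flips are untouched.

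For (1), suppose first that $e$ is a convex hull edge. Convex hull edges are never crossed, so the normalization argument applies to $e$ and yields a shortest sequence in which $e$ has trace length $1$. Now suppose $e_3$ is a convex hull edge. Reversing the flip sequence swaps the roles of $e$ and $e_3$, so the same argument applies. For the hull edges among $e_1,e_2$, consider the suffix starting at the tree where that edge appears; \cref{thm:reparkingch}(1) then says that $e_i$ is flipped at most once, finally. This contradicts $e_i$ being non-final in the trace. So if $e_1$ or $e_2$ lies on the hull we already reach a contradiction, which is consistent with (2)--(4) forcing crossings.

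For (2)--(4), suppose some consecutive pair in the trace does not cross, i.e.\ the flip is compatible. If the compatible flip is $e\to e_1$ or $e_1\to e_2$, we apply \cref{lem:prehappy3} to the sub-trace $(e,e_1,e_2)$, where $e_1$ is a diagonal by the previous paragraph. This replaces $e_1$ by a convex hull edge $h$ while keeping all other traces. In the new shortest sequence $e$ has trace $(e,h,e_2,e_3)$ with $h\in\CH$ non-final, contradicting \cref{thm:reparkingch}(1) applied to the suffix starting at $h$. (Alternatively, applying \cref{prop:final-destination} to that suffix shortens the trace of $e$.) The case where $e_2\to e_3$ is compatible is symmetric: apply the lemma to $(e_1,e_2,e_3)$ to move $e_2$ onto the hull. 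One subtlety is that the shortening must be achievable while respecting the statement's hypothesis, so we must check that the normalization does not lengthen the trace of $e$. Normalizing the hull edge $h$ only shortens the portion of the trace from $h$ onward, so this holds.

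The main obstacle is exactly this bookkeeping. \cref{prop:final-destination} is stated globally as ``there exists a shortest sequence'', whereas here we need a local version: normalizing a single uncrossed, non-final edge inside a given shortest sequence strictly reduces that edge's trace and does not increase the trace of $e$. We extract this from Properties (1)--(3) of the normalization in its proof, noting that the normalization only alters the trace of the normalized edge.
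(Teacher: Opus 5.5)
Your proof has the same case structure and tools as the paper's. You argue by contraposition condition by condition, apply the final flip property (through normalization) to a never-crossed hull edge, use \cref{lem:prehappy3} to move a compatibly inserted or removed diagonal onto the hull, and reverse the sequence for the conditions at the $e_3$ end.

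The real difference is in (3). The paper applies \cref{lem:prehappy3} to $(e_1,e_2,e_3)$, which puts $e_2$ on the hull. It then reverses the sequence and asserts that $h_2$ is flipped directly to $e$. You instead apply the lemma to $(e,e_1,e_2)$, which is legitimate because the lemma only needs one of the two flips to be compatible. So (3) becomes the same forward argument as (2). This is the more robust route. You replace only the suffix starting at the tree in which $h$ heads the trace of $e$, so the prefix is untouched and the trace of $e$ becomes $(e,h)$ or $(e,h,x)$. The paper's version keeps the endpoint $e_3$. But it needs the chain through $h_2$ in the modified reversed sequence to still end at $e$, and an existence statement like \cref{prop:final-destination} does not by itself guarantee that.

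The same identification issue affects your remaining reversal steps, exactly as in the paper: the case where $e_3$ is a hull edge, and (4). Your remark that normalization only alters the trace of the normalized edge would settle it, but it is false. If a flip $h_i\to g_{i+1}$ removes an edge of the cycle $c_i$, the normalized flip is $f_{i+1}\to g_{i+1}$, so $g_{i+1}$ is attached to a different trace. Your forward cases do not need the remark, since the untouched prefix suffices.

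Your own trick removes the need for reversal altogether:
\begin{itemize}
\item In (4) you may assume (3) holds, so $e_1$ and $e_2$ are diagonals. Applying \cref{lem:prehappy3} to $(e_1,e_2,e_3)$ gives the trace $(e,e_1,h_2,e_3)$.
\item Now $e_1\to h_2$ is compatible. Your argument for (3) then gives $(e,h_1,h_2,e_3)$, and the final flip property on the suffix starting at $h_1$ gives a trace of length at most $2$.
\item If $e_3$ is a hull edge, then $e_2\to e_3$ is compatible, so (4) fails and the same argument applies.
\end{itemize}

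Two smaller slips. First, \cref{thm:reparkingch}(1) is an existence statement, so it contradicts nothing about the given sequence. It supplies a replacement suffix, and that is what contradicts the hypothesis of the proposition. Second, a hull edge $e_2$ yields nothing in the forward direction, because $e_2\to e_3$ is already final. That case is covered by (3) anyway, since $e_1\to e_2$ is then compatible.
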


\begin{proof}
    We approach the different conditions one by one.
    \begin{description}
        \item[(1)] If $e$ is a convex hull edge, it is in particular not intersected by any other edge. By~\cref{prop:final-destination}, we conclude that there exists a shortest flip sequence that flips $e$ to its final position. 
        If $e_3$ is a convex hull edge, we apply~\cref{prop:final-destination} to the reversed flip sequence.
        \item[(2)] If $e$ and $e_1$ do not cross, then $e_1$ is added by a compatible flip. We can apply~\cref{lem:prehappy3} to obtain a shortest flip sequence in which $e_1$ is replaced by a convex hull edge. In a second step, we apply~\cref{prop:final-destination} to obtain a shortest flip sequence in which $e_1$ is flipped to its final position. Therefore, we have constructed a shortest flip sequence in which $e$ has trace length $2$.
        \item[(3)] If $e_1$ and $e_2$ do not cross, we first apply~\cref{lem:prehappy3} to obtain a shortest flip sequence in which $e_2$ is replaced with a convex hull edge $h_2$. Now we can reverse the flip sequence. In the new flip sequence $e_3$ is flipped to a convex hull edge. Then, we apply~\cref{prop:final-destination} to obtain a shortest flip sequence in which $h_2$ is flipped to its final position in the reversed flip sequence, that is, $e$. By reversing the flip sequence again, we obtain a shortest flip sequence in which $trace(e)=(e,h_2,e_3)$, in particular, $e$ has trace length $2$.
        \item[(4)] If $e_2$ and $e_3$ do not cross, we reverse the flip sequence and proceed as in Case (2).\qedhere
    \end{description}
\end{proof}

\begin{figure}[htb]
    \centering
    \includegraphics[page=2]{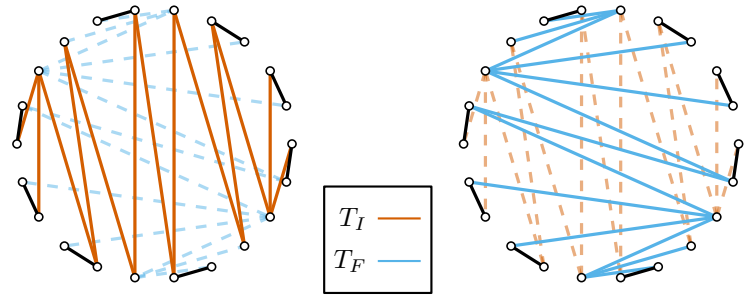}
    \caption{A counterexample to the reparking property.}
    \label{fig:reparking_counterexample}
\end{figure}

\cref{fig:reparking_counterexample} provides a counterexample towards the reparking property. This counterexample was constructed with the conditions of~\cref{prop:reparkingsetup} in mind. Any shortest flip sequence from $T_I$ in red to $T_F$ in blue contains an edge with trace length $3$. A sketch of a shortest flip sequence is given in~\cref{fig:reparking_counterexample_sequence}. First, the central diagonal in red is flipped to the green diagonal. Then, we can perform six perfect flips that insert edges from $T_F$. Afterwards, the green diagonal is flipped again. We can perform six additional perfect flips, before we can flip the green edge to its final position.

\begin{figure}
    \centering
    \includegraphics[page=3]{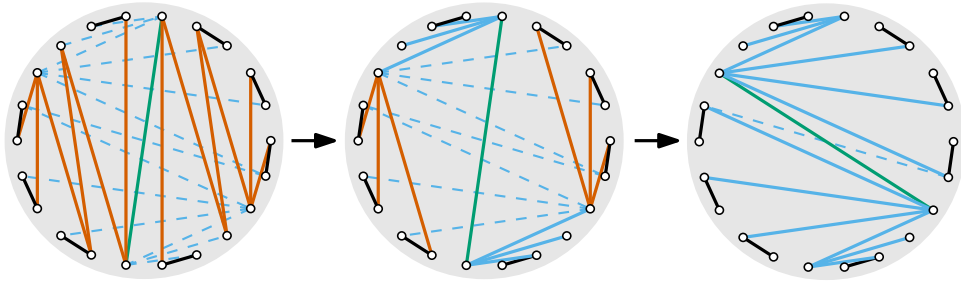}
    \caption{A counterexample to the reparking property.}
    \label{fig:reparking_counterexample_sequence}
\end{figure}

The flip sequence sketched in~\cref{fig:reparking_counterexample_sequence} is the unique shortest flip sequence (up to ordering) of the six perfect flips that occur between consecutive flips that involve the green edge. 
To~verify this, we again used our program~\cite{visual-and-verifier}. Here the relevant part of the flip graph has 5086 vertices (which are at most two non-perfect flips away from~$T_I$). 
The size of this graph seems thus to be rather challenging for human inspection.
As we saw already for the proof of~\cref{thm:fallacy} that a hand written case distinction will be too tedious we rely here on the program.
It generated a total of $400$ shortest flip sequences which is precisely $\binom{6}{3}\cdot\binom{6}{3}$, that is, the number of different orders how the perfect flips between parking flips can interleave.

We even took the construction one step further and provide a pair of trees in which for every shortest flip sequence there exists an edge that has trace length four, that is, it is two times reparked. The example is depicted in~\cref{fig:manyparking}. The edges in red are part of $T_I$, edges in blue are part of $T_F$, and the black edges on the convex hull are contained in both trees. We verified our example with the same program used for the two previous counterexamples.

\begin{figure}[h]
    \centering
    \includegraphics[page=9]{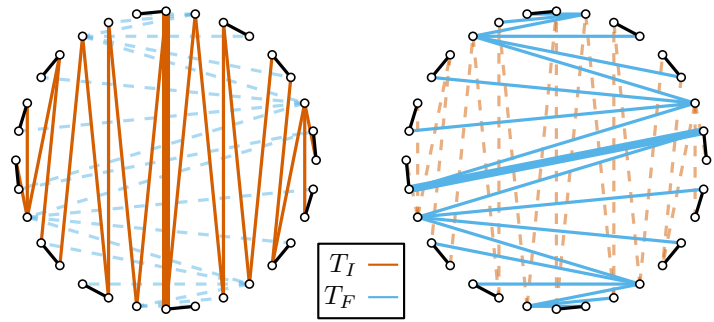}
    \caption{An example with 32 vertices in which every shortest flip sequence contains an edge that has trace length four (highlighted as the bold edge).}
    \label{fig:manyparking}
\end{figure}


\section{Conclusions and future work}
\label{sec:conclusions}

We established counterexamples to the parking and reparking properties for shortest flip sequences in the general setting, while proving that the reparking property holds when flips are restricted to be compatible.
These results clarify the limits of existing assumptions and point to several promising directions for future research.
\medskip
\begin{itemize}
    \item In~\cref{tab:algorithm-table} we summarized all recent upper bounds on the diameter and specify which of the investigated properties each bound respects. 
   	Most notably, in every case the trace length of all edges is bounded from above by~$2$. This raises two questions:
    Will our constructions need to be incorporated to achieve further improvements in the upper bounds on the diameter of the flip graph, and do these counterexamples offer new insights for constructing tighter lower bounds?
    \item By disproving the parking conjecture for general flips, we have eliminated a previously promising approach of establishing the happy edge conjecture.
    Consequently, we think that entirely new methods are required to make further progress on that conjecture.
    \item We provided a construction of initial and final trees where every shortest flip sequence for these pairs has an edge of trace length $3$ or $4$, respectively. 
    We expect that this construction can be iterated without limit. 
    Thus, we raise the following conjecture: 
    \end{itemize}
    \begin{conjecture}
    	For any constant $c$ there exists a pair of trees $T_I$ and $T_F$ such that any shortest flip sequence between $T_I$ and $T_F$ has an edge with trace length larger than $c$.
    \end{conjecture}
    \begin{itemize}
    \item Very recently it was shown that computing shortest (compatible) flip sequences between plane spanning trees of convex point sets is \NP-complete~\cite{bjerkevik2026flippingHardness} by Bjerkevik, Dorfer, Kleist, Ueckerdt, and Vogtenhuber. 
    This highlights the importance of identifying effective \FPT- or approximation algorithms. 
    Can the reparking property for compatible flips be exploited to enhance the currently known \FPT-algorithm of~\cite{AichholzerDV25}?
    Moreover, our counterexamples to the parking and reparking properties may be instrumental in establishing higher intractability results.
\end{itemize}

    \bibliography{references}
	\newpage
    \appendix
    \section{Omitted details from~\cref{sec:reparking-compatible-flips}}

\lemmaPreHappy*
\begin{proof}
    Without loss of generality, let $e\to f$ be the first and $f\to e'$ the last flip of the flip sequence. Further let $e \to f$ be a compatible flip, otherwise look at the reversed flip sequence. We denote with $v_a$ and $v_b$ the two vertices of $f$.
		
	We reorder and modify the other flips based on the following observation. The edge~$f$ separates the point set into two sides $A$ and $B$, the points that appear in clockwise direction from $v_a$ to $v_b$ and the points that appear in counterclockwise order, including $v_a$ and~$v_b$ in both $A$ and $B$. While $f$ is part of the tree, flips in the two sides can be executed independently from the other side. Note that every spanning tree contains a unique path between any two vertices. If, in particular, we consider the path~$p$ from~$v_a$ to~$v_b$ in~$T_I=T_{0}$, we see that this path has to be contained entirely in one side induced by~$f$, w.l.o.g.\ in side~$B$. Otherwise, there would be an edge along the path that crosses~$f$, introducing a non-compatible flip from~$T_I$ to~$T_{1}$.
		
	If we add the edge~$f$, the cycle that gets closed is entirely contained in~$B$. Therefore, $e$ lies in~$B$. Any flip that is executed from~$T_{1}$ to~$T_{k-1}$ that happens in~$A$ can already be executed in~$T_{I}$. Any cycle that previously involved the edge~$f$ now involves the path~$p$ instead. We order our flips in a way, such that all flips that happen in~$A$ are executed right away starting from~$T_{I}$. After executing all those steps, the side~$A$ already coincides with the side~$A$ in~$T_{F}$, except for the edge~$f$.
		
	Let us take a closer look at the part of~$T_{F}$ that lies in $A$. This part consists of $f$ and two trees~$T$ and $T'$ (possibly with an empty set of edges) in~$A$. Since~$T_{F}$ is plane there exist a unique pair of points that appear consecutively along the boundary of the convex hull and one belongs to~$T$ and the other to~$T'$. We continue our modified flip sequence by adding the convex hull edge~$h$ between those two points and removing the edge that would have been removed when adding~$f$.
		
	Now we execute all the flips in~$B$ from the original flip sequence. Any flip that formed a cycle that involves~$f$ now forms a cycle that involves~$h$ and paths in~$A$ connecting~$v_a$ and~$v_b$ to~$h$.
		
	As a last flip we perform $h\to e'$ and reach the final tree $T_F$. The trace of $e$ is $(e,h,e')$ and for any other edge in $T_I$ the trace did not change between the two flip sequences.
\end{proof}

\section{Omitted details from~\cref{sec:counterexamples}}

\lemmaParkingEdgeProperty*
\begin{proof}
    Consider the pair of trees~$T_I$, $T_F$ depicted in~\cref{fig:parking-counterexample}.
    In~\cref{fig:parking-counterexample-opt}, a sequence using a diagonal parking edge~$p_1$ (green) needs exactly one parking flip, $e_1 \to p_1$, while all other flips are final.
    Further, note that the first flip of any sequence from~$T_I$ to~$T_F$ must be a parking flip since every edge in the symmetric difference~$T_I\Delta T_F$ is crossed at least twice.
    Hence the sequence depicted in \cref{fig:parking-counterexample-opt} is a shortest flip sequence.
    We note that this flip sequence does also contain two edge slides. These are highlighted in purple.

    To prove the statement, it remains to show that any parking flip that parks to the convex hull instead of~$p_1$ requires at least one more flip.
    To do so, we show that no flip sequence that starts with a parking flip to the convex hull can reach~$T_F$ by performing only final flips after this parking flip.
    In~\crefrange{fig:parking-counterexample-park-1}{fig:parking-counterexample-park-4}, we hint at all the possible first flips that remove one diagonal of~$T_I\setminus T_F$ add an edge on the convex hull instead. In every case we depict one flip sequence that performs perfect flips for as long as possible until no other edge of~$T_F$ can be added without performing another parking flip. We will refer to such flip sequences as \emph{maximal} flip sequences.

    The flip sequences in~\cref{fig:cases} do not cover all cases that could possibly occur, since there are sometimes multiple edges that can be flipped to the same edge or multiple flips that can be executed in a different order while leading to the same result. For the case where the first flip is~${e_2\to g_8}$ a graph depicting all possible maximal flip sequences can be seen in~\cref{fig:example-parking-to-gap-8}. All the other cases in which a diagonal is parked on the convex hull can be argued similarly. 
    Since already the first case highlights how tedious the resulting case distinction is, we will resort to two different approaches to cover all cases. All the arguments in the case distinction boil down to three basic arguments: 
    (1)~If an edge of~$T_F$ is crossed at least twice, then it cannot be added by a single flip. 
    (2)~If an edge of~$T_F$ is crossed exactly once, then the only flip that could potentially add that edge is the flip that removes the edge that crosses it.
  (3)~We cannot add edges of~$T_F$ if adding the edge would create a cycle in the graph. In all cases, the cycle will contain a parking edge on the convex hull.

    After the proof, we will discuss how we checked all cases via a simple program~\cite{visual-and-verifier} and provide a directed acyclic graph~\cite{visual-and-verifier} that embeds all flip sequences that can be made with~$8$ flips where at most one of the flips can be a parking flip.

    Back to the flip sequences starting with~${e_2\to g_8}$. On the top, the initial tree is depicted, followed by the tree resulting from the flip~${e_2\to g_8}$. There are only two edges of~$T_F$ that are not crossed twice, namely~$f_4$ and~$f_3$, which are both crossed by the edge~$e_2$. Flipping~${e_1\to f_3}$ will disconnect the tree so the only valid flip is~${e_1\to f_4}$.

    For the next flip, we have multiple edges of~$T_F$ that can be inserted, since~$f_3$ is not crossed at all and the edges~$\overline{f_4}$, and~$\overline{f_3}$ are only crossed by~$\overline{e_2}$. To insert~$f_3$ the three flips~${e_3\to f_3}$,~${\overline{e_2}\to f_3}$, and~${g_8\to f_3}$ are possible. To insert~$\overline{f_3}$, we can perform the flip~${\overline{e_2}\to f_3}$. Note that no flip can add the edge~$\overline{f_4}$ since~$\overline{f_4}$ is crossed by~$\overline{e_2}$ and flipping~${\overline{e_2}\to \overline{f_4}}$ creates a cycle. 

    In the left tree of the fourth row, there are three edges in~$T_F$ that are crossed by a single edge of~$T_I$. More precisely, $f_2$ is only crossed by~$e_4$ and~$\overline{f_3}$ is only crossed by~$\overline{e_2}$. Similarly, also~$\overline{f_4}$ is only crossed by~$\overline{e_2}$, but directly flipping~${\overline{e_2}\to\overline{f_4}}$ creates a cycle. We can perform the flips~${e_4 \to f_2}$ and~${\overline{e_2} \to \overline{f_3}}$ in some order. After the flip~${\overline{e_2} \to \overline{f_3}}$, also the flip~${g_8\to\overline{f_4}}$ becomes possible.
    This gives a total of three different flip sequences that perform the flips~${e_4 \to f_2}$,~${\overline{e_2} \to \overline{f_3}}$, and~${g_8 \to \overline{f_4}}$, respecting that the last flip has to occur after the middle flip.
    All result in the same situation with two edges of~$T_F$ being crossed twice by two edges~of~$T_I$.

    In the second tree in the fourth row, we have so far added the edges~$f_4$, $f_3$ and~$g_8$. The only edges of~$T_F$ that are not crossed twice are~$\overline{f_4}$ and~$\overline{f_3}$. $\overline{f_4}$ can only be added via the flip~${g_8\to\overline{f_4}}$ without creating a cycle. $\overline{f_3}$ can be added via the flips~${e_3\to\overline{f_3}}$ and~$\overline{e_3}$ to~$\overline{f_3}$.

    By flipping~${g_8\to \overline{f_4}}$, we can either continue with flipping~${\overline{e_3}\to\overline{f_3}}$ then~${\overline{e_4} \to \overline{f_2}}$ or with~${e_3\to \overline{f_3}}$ and~${e_4 \to f_3}$. In both cases, we end up with two edges from~$T_I$ being intersected by two edges of~$T_F$. We cannot add any further edge of~$T_F$ without performing another parking flip.

    By flipping~${e_3\to \overline{f_3}}$ we are left with two more possible flips, namely~${g_8 \to \overline{f_4}}$ and~${e_4\to f_3}$. Any order of the two flips yields a valid flip sequence. Both times we are then left with a situation in which both~$\overline{f_4}$ and~$f_1$ from~$T_F$ are crossed by~$\overline{e_3}$ and~$\overline{e_4}$ from~$T_I$. We cannot add any further edge of~$T_F$ without performing any further parking flips. A symmetric situation occurs after flipping~${\overline{e_3}\to \overline{f_3}}$ with the two flips~${g_8\to \overline{f_4}}$ and~${\overline{e_4}\to \overline{f_2}}$.

    We next consider the third tree in the fourth row. We remark that the edge~$\overline{f_3}$ splits the tree into two parts. Flips in the two parts can be shuffled arbitrarily. This yields multiple flip sequences that have the same outcome and only differ by the ordering of the flips, more particularly, when the flip~${g_8\to \overline{f_4}}$ occurs. In the side of~$\overline{f_3}$ that does not contain~$g_8$ there are two flip sequences. Those are~${e_3\to f_3}$ followed by~${e_4 \to f_2}$ and~${\overline{e_3} \to f_3}$ followed by~${\overline{e_4} \to \overline{f_2}}$. Both end in the situation that two edges of~$T_F$ are crossed twice by two edges of~$T_I$.

    In the last tree in the fourth row, we have added the edges~$f_3$ and~$f_4$ of~$T_F$. The only edges of~$T_F$ that are not crossed twice are~$\overline{f_3}$ and~$\overline{f_4}$ which are both crossed once by~$\overline{e_2}$. The only possible non-parking flip is~${\overline{e_2}\to \overline{f_4}}$. Now the only edge that can be is not crossed twice is~$\overline{f_3}$ which can be added by the two flips~${e_3 \to \overline{f_3}}$ and~${\overline{e_3}\to \overline{f_3}}$. ${e_3 \to \overline{f_3}}$ can only be followed by~${e_2\to f_2}$, whereas~${\overline{e_3}\to\overline{f_3}}$ can only be followed by~${\overline{e_2} \to \overline{f_2}}$. In both cases we again end up with two edges of~$T_F$ being crossed by two edges of~$T_I$.
\end{proof}

\section{Additional figures}
\label{app:figures}

\begin{figure}[htp]
    \centering%
    \includegraphics{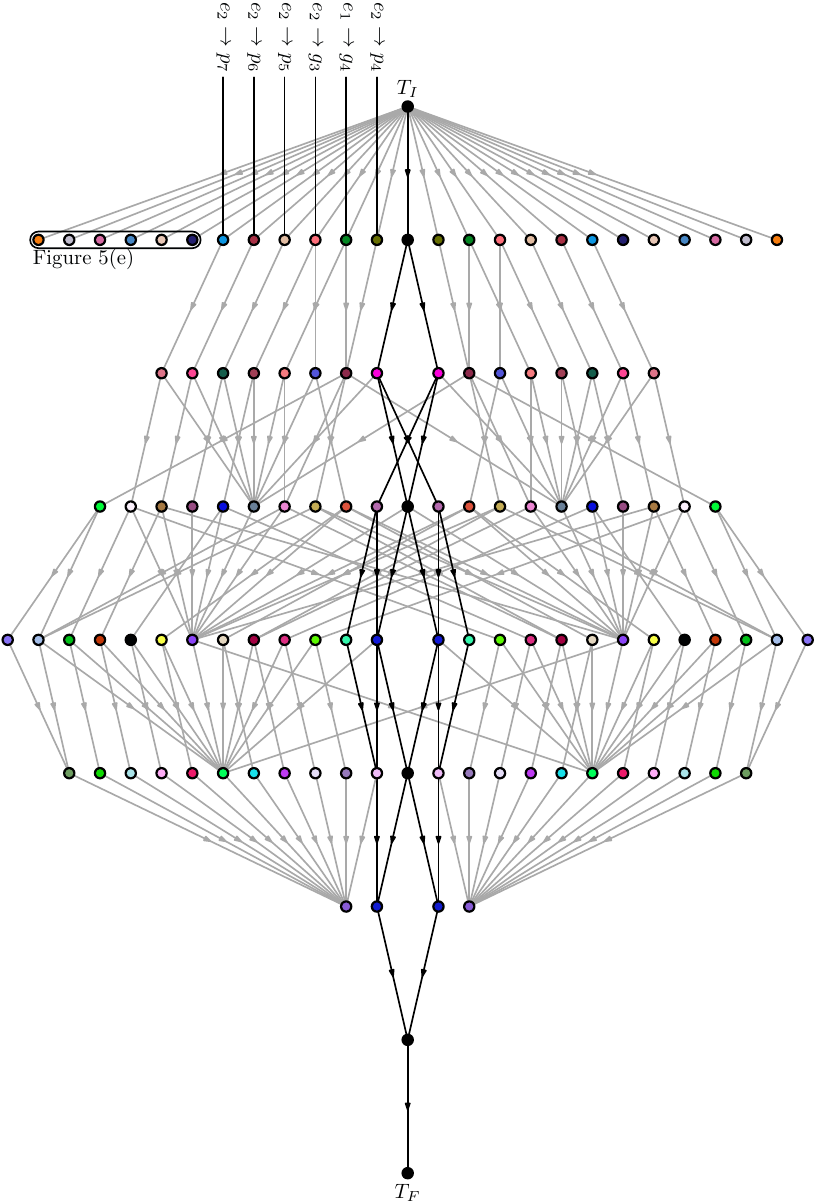}%
    \caption{
        Reachable subgraph of trees from~$T_I$ as shown in~\cref{fig:parking-counterexample}, when making at most one parking flip.
        Black edges indicate that performing this flip will not make the target tree $T_F$ unreachable by the remaining sequence, and node colors indicate symmetric trees.
    }
    \label{fig:counterexample-dag}
\end{figure}

\begin{figure}[p]
    \centering%
    \includegraphics{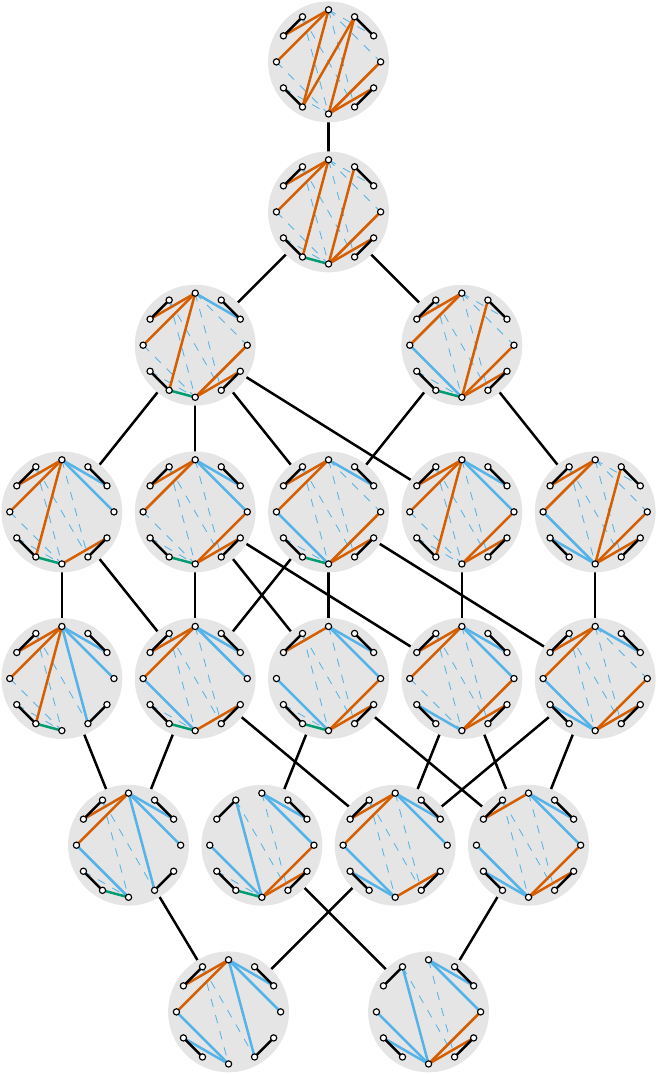}%
    \caption{Reachable subgraph of trees after flipping~${e_1\to g_4}$.}
    \label{fig:descends_383}
\end{figure}

\begin{figure}[p]
    \centering%
    \includegraphics{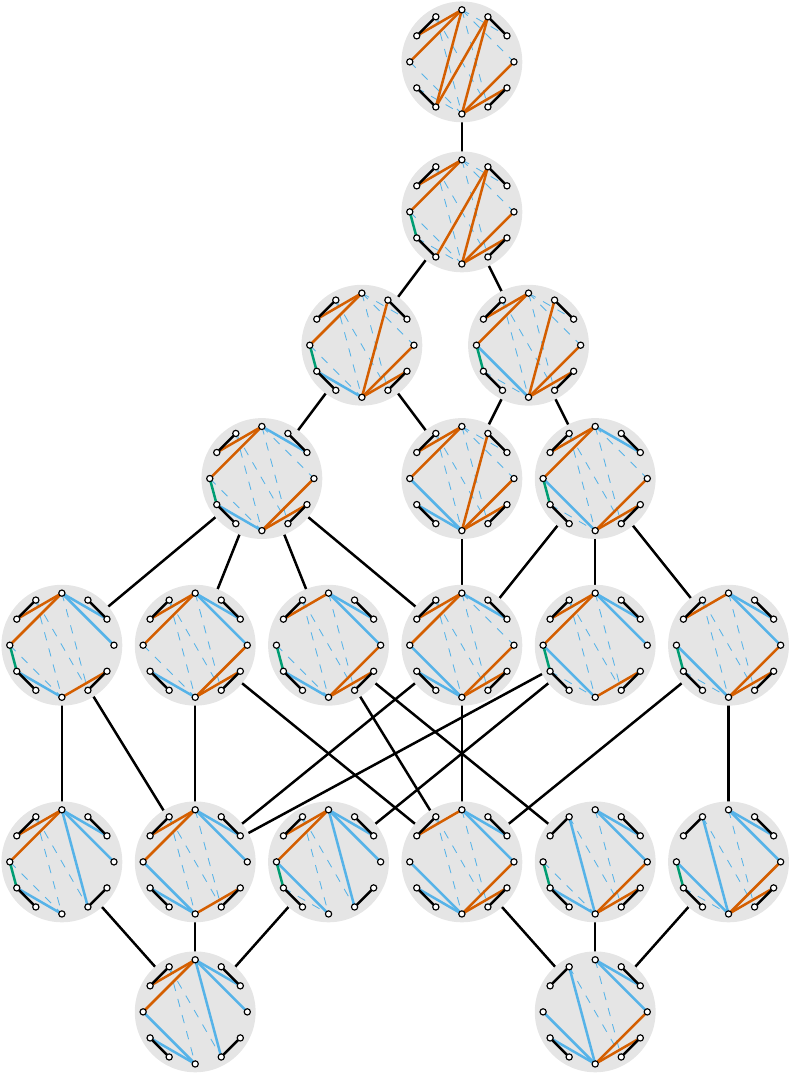}%
    \caption{Reachable subgraph of trees after flipping~${{e_2}\to g_3}$.}
    \label{fig:descends_80}
\end{figure}

\begin{figure}[p]
    \begin{subfigure}[t]{0.45\columnwidth - 0.5em}%
        \centering%
        \includegraphics{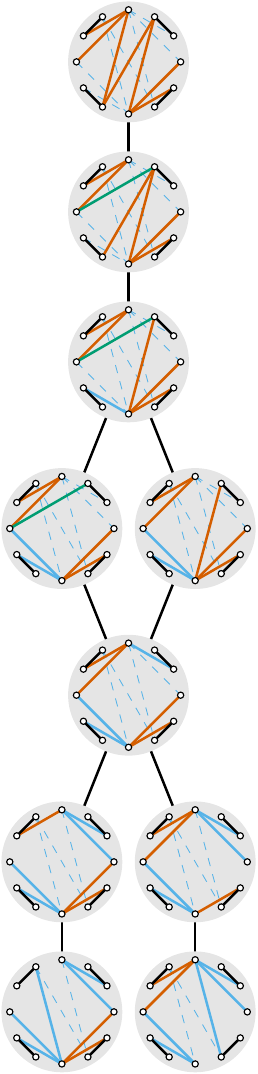}%
        \subcaption{Reachable subgraph of trees after flipping~${e_2\to p_5}$, where ${p_5=v_{4}v_{12}}$.}
        \label{fig:descends_156}%
    \end{subfigure}%
    \hfill%
    \begin{subfigure}[t]{0.55\columnwidth - 0.5em}%
        \centering%
        \includegraphics{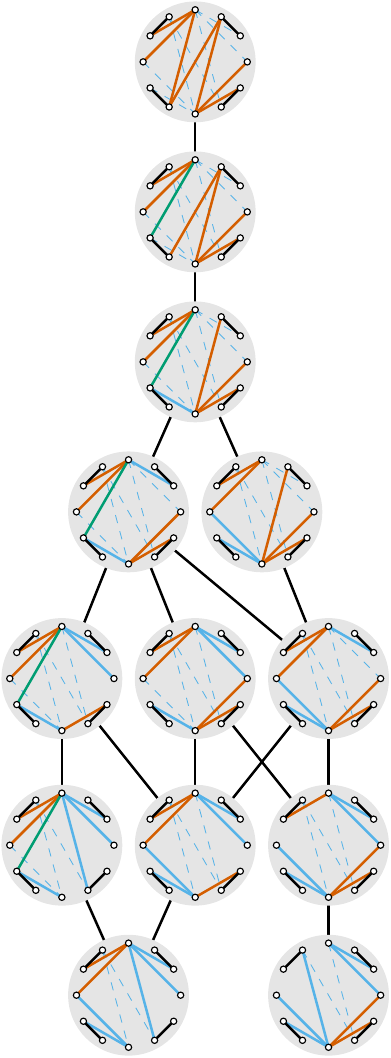}%
        \subcaption{Reachable subgraph of trees after flipping~${{e_2}\to p_6}$, where~${p_6=v_{1}v_{5}}$.}
        \label{fig:descends_7}%
    \end{subfigure}%
    \caption{The subgraph of reachable trees after two diagonal park flips. These are not relevant to the correctness of~\cref{thm:fallacy} but included for completeness.}
    \label{fig:descends_156_7}%
\end{figure}

\begin{figure}[p]
    \centering%
    \includegraphics{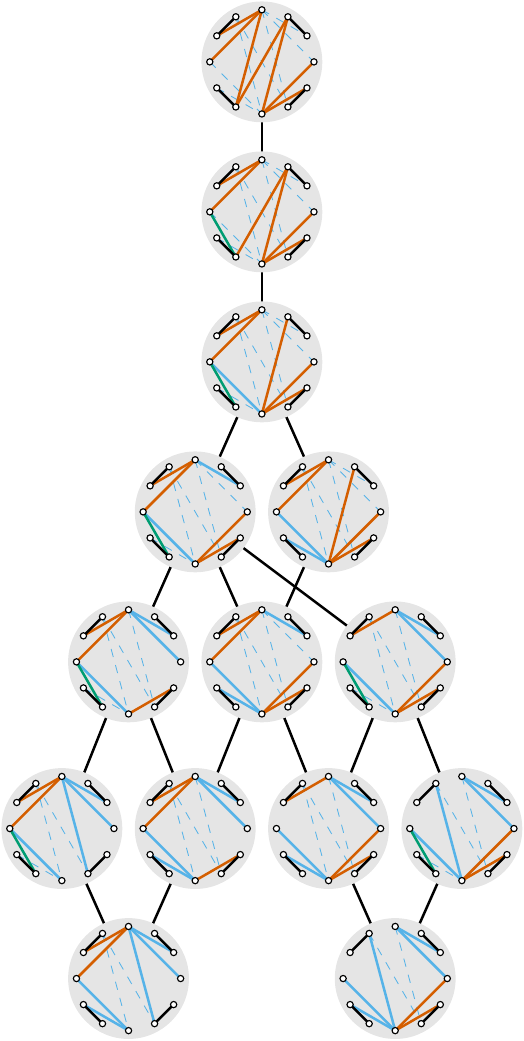}%
    \caption{Reachable subgraph of trees after flipping~${e_2\to p_7}$, where $p_7=v_{4}v_{6}$.
    This is not relevant to the correctness of~\cref{thm:fallacy} but included for completeness.}
    \label{fig:descends_132}
\end{figure}

\end{document}